\newtheorem{definition}{Definition}
\newtheorem{theorem}{Theorem}
\newtheorem{lemma}{Lemma}
\newtheorem{proof}{Proof}
\begin{document}
\title{\LARGE{Scaling Behaviors  of Wireless Device-to-Device Communications with Distributed Caching}}
\author{Negin Golrezaei,~Alexandros G. Dimakis,~Andreas F. Molisch
\\ Dept. of Electrical Eng.\\
University of Southern California\\
%Los Angeles, CA, USA\\
emails:~\{golrezae,dimakis,molisch\}@usc.edu
}
\maketitle

\begin{abstract}
We analyze a novel architecture for caching popular video content to enable wireless device-to-device collaboration. We focus on the asymptotic scaling characteristics 
and show how they depends on video content popularity statistics. We identify a fundamental conflict between collaboration distance and interference and show how to optimize the transmission power to maximize frequency reuse. 

 Our main result is a closed form expression of the optimal collaboration distance as a function of the model parameters. Under the common assumption of a Zipf distribution for content reuse, we show that if the Zipf exponent is greater than 1, it is possible to have a number of D2D interference-free collaboration pairs that scales linearly in the number of nodes. If the Zipf exponent is smaller than 1, we identify the best possible scaling in the number of D2D collaborating links. Surprisingly, a very simple distributed caching policy achieves the optimal scaling behavior and therefore there is no need to centrally coordinate what each node is caching.	
	
\end{abstract}
\section{Introduction}

Wireless mobile data traffic is expected to increase by a factor of $40$ over the next five years, from the current $93$ Petabytes to $3600$ Petabytes per month in the next five years~\cite{cisco66}. This explosive demand is fueled mainly by mobile video traffic that is expected to increase by a factor of $65$, and become the by far dominant source of data traffic. Since the available spectrum is physically limited and the spectral efficiency of current systems is already close to optimum, the main method for meeting this increased demand is to bring content closer to the users. Femto base stations \cite{chandrasekhar2008femtocell} are currently receiving a lot of attention for this purpose. 

A significant bottleneck in such small-cell architectures is that each station requires a high-rate backhaul link. Helper stations that replace high-rate backhaul with storage ~\cite{femtocaching}\cite{coded_femtocaching}, can ameliorate the problem, but still require additional infrastructure and have limited flexibility. 

To circumvent these problems, we recently proposed the use of device-to-device (D2D) communications combined with video caching in mobile devices~\cite{ICC_Workshop}  \cite{TWC}. The approach is based on three key observations: (i) Modern smartphones and tablets have significant storage capacity, (ii) video has a large amount of \textit{content reuse}, i.e., a small number of video files accounts for a large fraction of the traffic. 
(iii) D2D communication can occur over very short distances thus allowing high frequency reuse. Our proposed architecture functions as follows: users can collaborate by caching popular content and utilizing local D2D communication when a user in the vicinity requests a popular file. The base station can keep track of the availability of the cached content and direct requests to the most suitable nearby device; if there is no suitable nearby device, the BS supplies the requested video file directly, via a traditional downlink transmission. Storage allows users to collaborate even when they do not request the same content \textit{at the same time}. This is a new dimension in wireless collaboration architectures beyond relaying and cooperative communications as in~\cite{magazine} \cite{ICC_Workshop} and references therein.
%(cite
%1. Cooperative strategies and capacity theorems for relay networks
%G Kramer, M Gastpar IT Transc. 2005
%2. A simple cooperative diversity method based on network path selection
%A Bletsas, A Khisti, DP Reed… - Selected Areas in …, 2006
%3. Reliable physical layer network coding
%B Nazer etal.  Proceedings of the IEEE, 2011 -)
%
%
%

A D2D video network can be analyzed using a protocol model, which means that only two devices that are within a "collaboration distance" of each other can exchange video files, while devices with a larger distance do not create any useful signal, but also no interference, for each other. The choice of the collaboration distance represents a tradeoff between two counteracting effects: decreasing the collaboration distance increases the frequency reuse and thus the potential throughput, but on the other hand decreases the probability that a device can find a requested file cached on another device within the collaboration distance. In \cite{TWC} we described this tradeoff and provided numerical solutions for the optimum distance, and the resulting system throughput. 

In the current paper we concentrate on the analytical treatment of the {\em scaling behavior} of a D2D network, i.e., how the throughput scales as the number of nodes increases. For conventional ad-hoc networks, scaling behavior has been derived in the seminal paper by Gupta and Kumar \cite{gupta2000capacity} has further 
received significant attention (\textit{e.g.} see \cite{Tse,grossglauser2001mobility,franceschetti2009capacity}).
%1. Mobility increases the capacity of ad hoc wireless networks(Grossglauser+Tse)
%2. Hierarchical cooperation achieves optimal capacity scaling in ad hoc networks
%(A Ozgur, O Lévêque IT Transactions 2007)
% 3. The capacity of wireless networks: Information-theoretic and physical limits
%MD Migliore, P Minero - Information Theory, IEEE 2009
%
This architecture not only differs from ad-hoc or collaborative networks in its application, but also shows a fundamentally different behavior due to its dependence on the video reuse statistics. We provide a closed form expression of the optimal collaboration distance as a function of the content reuse distribution parameters.

We model the request statistics for video files by a Zipf distribution which has been shown to fit well with measured YouTube video requests~\cite{tracedata} \cite{zipf}. We find that the scaling laws depend critically on the Zipf parameter, \textit{i.e.}, on the concentration of the request distribution. We show that if the Zipf exponent of the content reuse distribution is greater than $1$, it is possible 
to have a number of D2D interference-free collaboration pairs that scales linearly with the number of nodes. If the Zipf exponent is smaller than $1$, we identify the best possible scaling in the number of D2D collaborating links.  %We present a centralized caching policy that can obtain the optimal scaling.
 Surprisingly, a very simple distributed caching policy achieves the optimal scaling behavior and therefore there is no need to centrally coordinate what each node is caching. For Zipf exponent equal to $1$, we find the best collaboration distance and the best possible scaling. 

The remainder of this paper is organized as follows: In Section~\ref{sec:model} we set up the D2D formulation 
and explain the tradeoff between collaboration distance and interference. 
Section~\ref{sec:analysis} contains our two main theorems, the scaling behavior for Zipf exponents 
greater, smaller than and equal to $1$. In Section \ref{sec:discussion} we discuss future directions, open problems and conclusions. Finally, the Appendix contain the proofs of our theorems.

\section{Model and Setup}

\label{sec:model}
In this section, we discuss the fundamental system model; for a discussion of the assumptions, and justifications of simplifications, we refer the interested reader to \cite{TWC}. 

Assume a cellular network where each cell/base station (BS) serves $n$ users. For simplicity we assume that the cells are square, and we neglect inter-cell interference, so that we can consider one cell in isolation. Users are distributed randomly and independently in the cell. We assume that the D2D communication does not interfere with
the base station that can serve video requests that cannot be otherwise covered. For that reason, our only concern is the maximization of the number of D2D collaboration links that can be simultaneously scheduled. We henceforth do not need to consider explicitly the BS and its associated communications. 

%For our theoretical analysis we rely on a simple ``protocol model'' where users are connected on a

The communication is modeled by a standard protocol model on a random geometric graph (RGG) $G(n,r(n))$. In this model users are randomly and uniformly distributed in a square (cell) of size $1$. Two users (assuming D2D communication is possible) can communicate if their euclidean distance is smaller than some collaboration distance $r(n)$~\cite{gupta2000capacity,RGG book}. The maximum allowable distance for D2D communication $r(n)$ is determined by the power level for each transmission. Figure \ref{RGG} illustrates an example of an RGG.
\begin{figure}
\centerline{\includegraphics[width=5.5 cm]{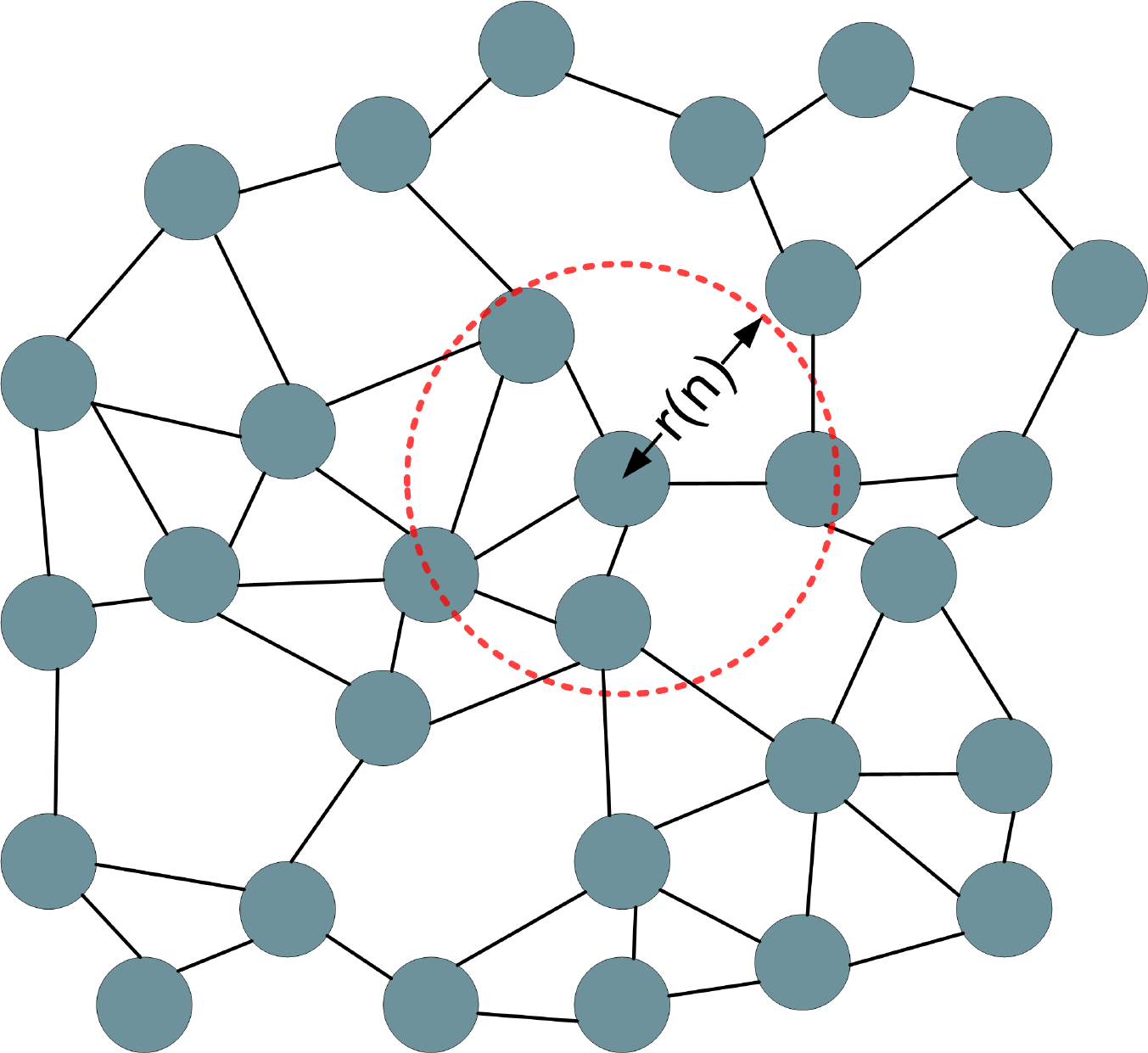}}
\caption{Random geometric graph example with collaboration distance $r(n)$.}
\label{RGG}
\end{figure}
% Users cache files and can serve requests from other users in their vicinity by enabling localized communication and hence frequency re-use. If a user requests a file that is not cached in l, the BS handles the request. We assume that the distance between user terminals and BS is much larger than the distance between the helpers and the user terminals and hence, users prefer to receive content from helpers since they will experience lower delay.

We assume that users may request files from a set of size $m$ that we call a ``library''.  
The size of this set should increase as a function of the number of users $n$. Intuitively, the set of YouTube videos requested in Berkeley in one day should be smaller than the set of requested in Los Angeles. 
We assume that this growth should be sublinear in $n$, \textit{e.g.} $m$ could be $\Theta(\log(n))$ \footnote{We use the standard Landau notation: $f(n)=O(g(n))$ and $f(n)=\Omega(g(n))$ respectively denote $|f(n)|\leq c_1 g(n)$ and $|f(n)|\geq c_2 g(n)$ for some constants $c_1,c_2$. $f(n)=\Theta(g(n))$, stands for $f(n)=O(g(n))$ and $f(n)=\Omega(g(n))$.  Little-o notation, \textit{i.e.,} $f(n) = o(g(n))$ is equivalent to $\lim_{n\rightarrow \infty} \frac{f(n)}{g(n)}=0$.}.

Each user requests a file from the library by sampling independently using a popularity distribution. 
Based on several studies, Zipf distributions have been established as good models for the measured popularity of video files \cite{zipf,tracedata}. Under this model, the popularity of the $i$th popular file, denoted by $f_i$, is inversely proportional to its rank:
 \begin{equation}\label{zipf}
 f_i=\frac{{\frac{1}{{{i^{\gamma_r} }}}}}{{\sum\limits_{j = 1}^m {\frac{1}{{{j^{\gamma_r} }}}} }},\,\,\ 1\leq i\leq m.
 \end{equation}
 The Zipf exponent $\gamma_r$ characterizes the distribution by controlling the relative popularity of files.
 Larger $\gamma_r$ exponents correspond to higher content reuse, \textit{i.e.,} the first few popular files account for the majority of requests.

Each user has a storage capacity called cache which is populated with some video files. For our scaling law analysis we assume that all files have the same size, and each user can store one file. This yields a clean formulation and can be easily extended for larger storage capacities. 
 
Our scheme works as follows: If a user requests one of the files stored in neighbors' caches in the RGG, neighbors will handle the request locally through D2D communication; otherwise, the BS should serve the request. %As a result, the probability that D2D communication is done in clusters depends on what users store. We can imagine that users in each cluster have access to a central virtual cache (CvC) filled up with the stored files in users' caches in the cluster.  For filling up the CvC, we can exploit the redundancy in requests. More popular files will be requested more. Thus, if users store the most popular files, it is more probable that file requests are serviced locally.  In other words, if popular content is stored in that way, most of the traffic for popular content can be relegated to local D2D  communication. 
 Thus, to have D2D communication it is not sufficient that the distance between two users be less than $r(n)$; users should find their desired files locally in caches of their neighbors. A link between 
two users will be called potentially \textit{active} if one requests a file that the other is caching. 
 Therefore, the probability of D2D collaboration opportunities depends on what is stored and requested by the users. 
 
The decision of what to store can be taken in a distributed or centralized way.  %{\color{red}{NG: Should I mention here that the best optimum file placement is maximization of submodular function over matroid constraint? We can also have section for the optimum placement?}}
A central control of the caching by the BS allows very efficient file-assignment to the users. However, if such control is not desired or the users are highly mobile, caching has to be optimized in a distributed way. The simple randomized caching policy we investigate makes each user choose which file to cache by sampling from a caching distribution. 
It is clear that popular files should be stored with a higher probability, but the question is how much redundancy we want to have in our distributed cache. 

%In order to avoid the difficulties of optimizing a function, we assumed that the functional form of the caching pdf is also a Zipf distribution, but with a parameter $\gamma_c$ that can be different from the parameter $\gamma_r$. %Still, a key insight from asymptotic analysis is that also in this case, the number of active clusters scales linearly with the number of users in the cell if both $\gamma$ and $\gamma_1$ are larger than unity. 

%Every user is assumed to have a storage capacity called cache. We can take advantage of users' caches to have D2D communication. Without loss of generality, we suppose that every user can store one file. This assumption has the advantage of yielding a clean formulation.  However, our work can be easily extended to larger cache size.
% If a user requests one of the files stored in caches of his neighbours in RGG, neighbours will handle the request locally through D2D communication; otherwise, the BS should serve the request. 
 
%To answer this question, we assume that users cache files according to a Zipf distribution with different exponent denoted by $\gamma_c$. 
%We show that the redundancy in cached files should not necessary equal to the redundancy in requests.

We assume that all D2D links share the same time-frequency transmission resource within one cell area. This is possible since the distance between requesting user and user with the stored file will typically small. However, there should be no interference of a transmission by others on an active D2D link. We assume that (given that node $u$ wants to transmit to node $v$) any transmission within range $r(n)$ from $v$ (the receiver) can introduce interference for the $u-v$ transmission. Thus, they cannot be activated simultaneously. 
This model is known as \emph{protocol model}; while it neglects important wireless propagation effects such as fading \cite{Molisch_book_2011}, it can provide fundamental insights and has been widely used in prior literature \cite{gupta2000capacity}.

To model interference given a storage configuration and user requests we start with all 
potential D2D collaboration links. Then, we construct the conflict graph as follows. We model any possible D2D link  between node $u$ as transmitter to node $v$ as a receiver with a vertex $u-v$ in the conflict graph. Then, we draw an edge between any two vertices (links) that create interference for each other according to the protocol model. Figure \ref{fig:subfig3} shows how the RGG in Figure \ref{fig:subfig1} is converted to the conflict graph. In Figure \ref{fig:subfig1}, receiver nodes are green and transmitter nodes are yellow. The nodes that should receive their desired files from the BS are gray. 
A set of D2D links is called active if they are potentially active and can be scheduled simultaneously, 
\textit{i.e.}, form an independent set in the conflict graph. The random variable counting the number of active D2D links under some policy is denoted by $L$. 

Figure~\ref{fig:subfig3} shows the conflict graph and one of maximum independent sets for the conflict graph. We can see that out of $14$ possible D2D links $9$ links can co-exist without interference. 
As is well known, determining the maximum independent set of an arbitrary graph is computationally intractable (NP complete~\cite{lawler1980generating}). 
Despite the difficulty of characterizing the number of interference-free active links, we can determine the best possible scaling law in our random ensemble.

\begin{figure}
\centerline{\includegraphics[scale=0.45]{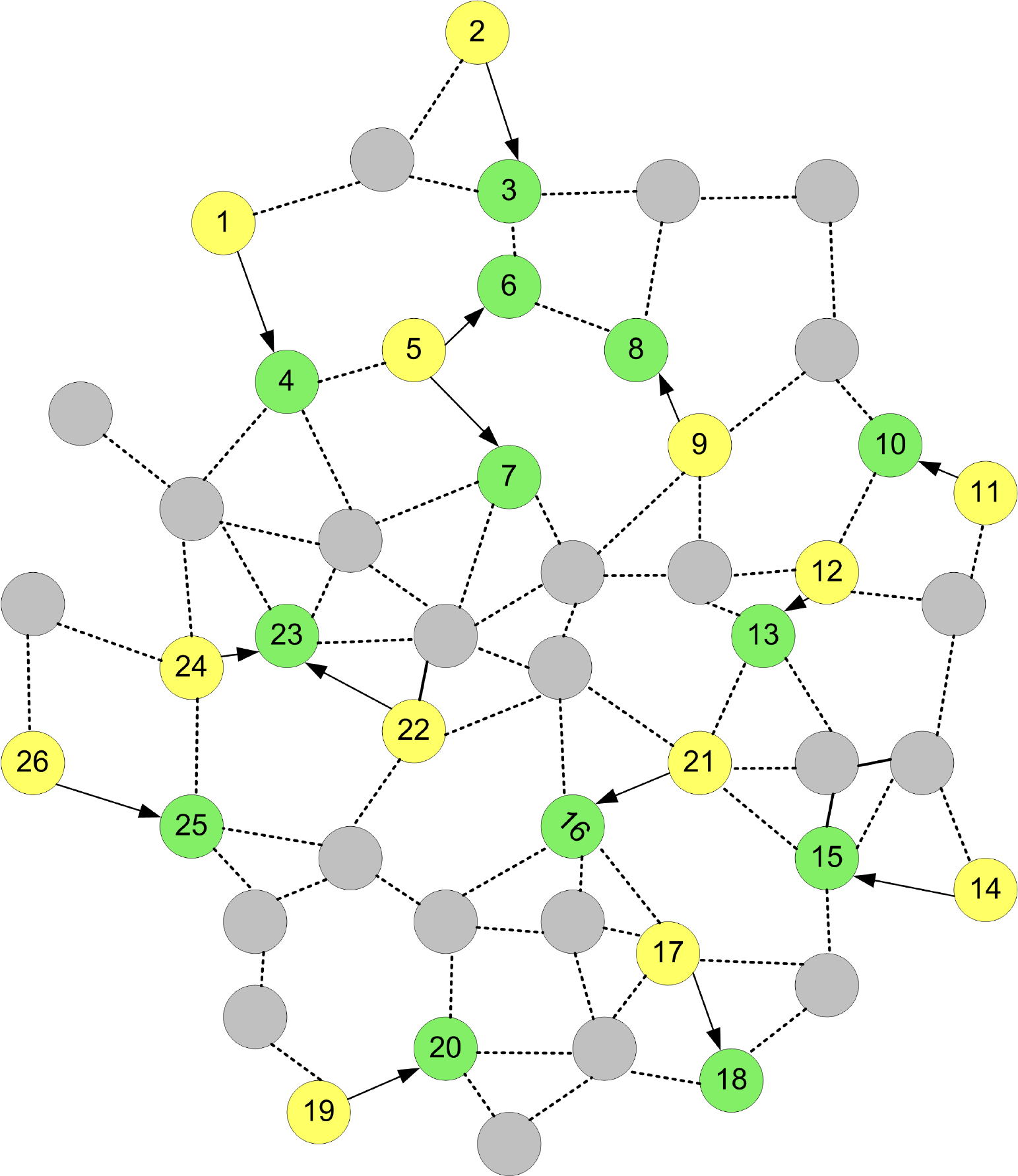}}
\caption{Random geometric graph, yellow and green nodes indicate receivers, transmitters in D2D links.  Gray nodes get their request files from the BS. Arrows show all possible D2D links.}
\label{fig:subfig1}
\end{figure}

\begin{figure}
\centerline{\includegraphics[scale=0.6]{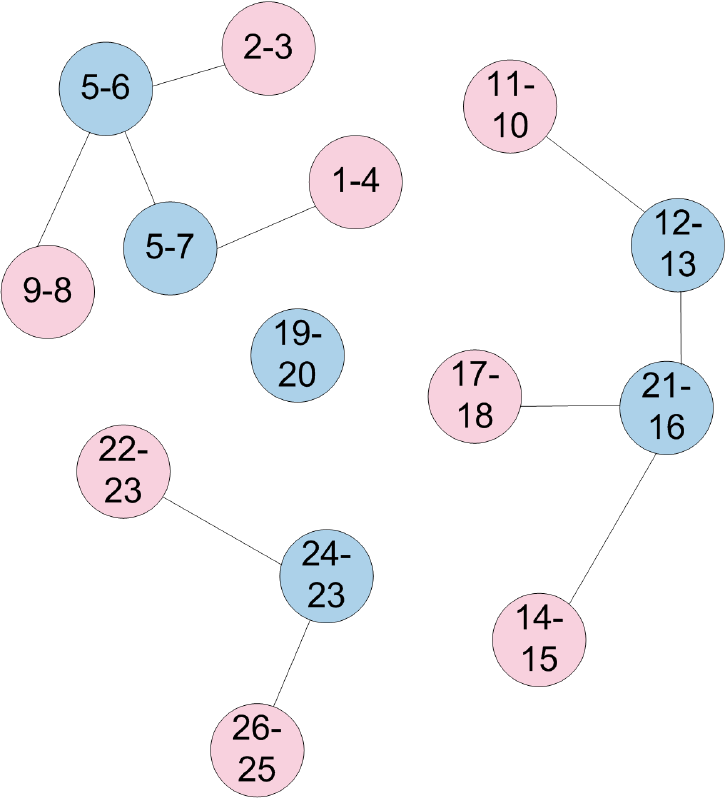}}
\caption{conflict graph based on Figure \ref{fig:subfig1} and one of maximum independent set of the conflict graph; pink vertices are those D2D links that can be activated simultaneously.}
\label{fig:subfig3}
\end{figure}

\section{Analysis}

\label{sec:analysis}

\subsection{Finding the optimal collaboration distance}
\label{Finding the optimal transmission radius}

We are interested in determining the best collaboration distance $r(n)$ and caching policy such that 
the expected number of active D2D links is maximized. 
Our optimization is based on balancing the following tension: The smaller the transmit power, the smaller the region in which a D2D communication creates interference. Therefore, more D2D pairs can be packed into the same area allowing higher frequency reuse.  On the other hand, a small transmit power might not be sufficient to reach a mobile that stores the desired file. Smaller power means smaller distance and hence smaller probability of collaboration opportunities. 
%The optimum way to solve this problem would be to 
%assign different transmit power to each node dynamically, to maximize the number of non-interfering collaborating pairs. However this approach would be intractable and non-practical. 

We analyze the case where the nodes do not possess power control with fast adaptation, but rather all users have the same transmit power that depends only on the node density. We then show how to optimize it based on the content request statistics. Our analysis involves finding the best compromise between the number of possible parallel D2D links and the probability of finding the requested content, as discussed above.
Our results consist of two parts. In the first part (upper bound), we find the best achievable scaling for the expected number of active D2D links. In the second part (achievability), we determine an optimal caching policy and $r(n)$ to obtain the best scaling for the expected number of active links $E[L]$. %The optimal policy can be either centralized or distributed. We find the optimal centralized and distributed policies for $\gamma_r>1$ and $\gamma_r<1$ regions.

The best achievable scaling for the expected number of active D2D links depends on the extend of content reuse. Larger Zipf distribution exponents correspond to more redundancy in the user requests and a small number of files accounts for the majority of video traffic. Thus, the probability of finding requested files through D2D links increases by having access to few popular files via neighbors.

We separate the problem into three different regions depending on the Zipf exponent: $ \gamma_r>1$, $\gamma_r<1$, and $\gamma_r=1$.
For each of these regions, we find the best achievable scaling for $E[L]$ and the optimum asymptotic $r(n)$ denoted by $r_{opt}(n)$. 
We also show that for $\gamma_r>1$ and $\gamma_r<1$ regions a simple distributed caching policy 
has optimal scaling, \textit{i.e.}, matches the scaling behavior that any centralized caching policy could achieve. This caching policy means that each device stores files randomly, with a properly chosen caching distribution, namely a Zipf distribution with parameter $\gamma_c$. For $\gamma_r=1$, we present an optimal centralized caching policy. 

Our first result is the following theorem:
\begin{theorem}\label{theorem_1}
If the Zipf exponent $\gamma_r >1 $,  
\begin{itemize}
\item[i)] \textbf{Upper bound:}
For any caching policy,
 $E[L]=O(n)$,
 \item[ii)] {\textbf{Achievability:}} Given that  $c_1\sqrt{\frac{1}{n}}\leq r_{opt}(n)\leq c_2 \sqrt{\frac{1}{n}}$ \footnote{$c$ and $c_i$s are positive constants that do not depend on $n$.} and using a Zipf caching distribution with exponent $\gamma_c>1$ then $E[L]= \Theta(n)$.
% \item[iii)] \textbf{Centralized Achievability:}
\end{itemize}
\end{theorem}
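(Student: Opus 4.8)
The plan is to prove the two directions separately, observing that once the $O(n)$ upper bound of part (i) is in hand, establishing $\Theta(n)$ in part (ii) reduces to a matching lower bound $E[L]=\Omega(n)$. For the upper bound I would use a purely combinatorial argument that holds for \emph{every} caching policy and every $r(n)$. Each potentially active D2D link consists of one distinct transmitter and one distinct receiver, and under the receiver-centric protocol model a node that is receiving cannot simultaneously transmit or receive a second stream, since such a transmission lies within distance $0<r(n)$ of the receiver and hence interferes. Therefore, in any independent set of the conflict graph every one of the $n$ nodes participates in at most one active link, so $L\le n/2$ holds deterministically and $E[L]=O(n)$ follows at once.

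For achievability I would fix $r_{opt}(n)=\Theta(1/\sqrt{n})$ and tessellate the unit cell into $\Theta(n)$ axis-aligned subsquares of side length $r(n)/\sqrt{2}$, chosen so that any two users lying in the same subsquare are within Euclidean distance $r(n)$ and are thus eligible neighbors. Since the number of users in a fixed subsquare is Binomial with mean $n\cdot\Theta(1/n)=\Theta(1)$, each subsquare contains at least two users with a constant probability bounded away from $0$. The core estimate is then the probability that two co-located users form a collaboration opportunity. Writing $c_i\propto i^{-\gamma_c}$ for the caching probability of the $i$th file, the probability that one of the two users requests a file the other is caching is at least
\[
\sum_{i=1}^{m} f_i\, c_i
= \frac{\sum_{i=1}^{m} i^{-(\gamma_r+\gamma_c)}}{\left(\sum_{j=1}^{m} j^{-\gamma_r}\right)\left(\sum_{j=1}^{m} j^{-\gamma_c}\right)}.
\]
Because $\gamma_r>1$ and $\gamma_c>1$, every sum appearing here converges to a finite constant as $m\to\infty$, so this matching probability is $\Theta(1)$ uniformly in $n$. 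Combining it with the occupancy bound, each subsquare produces at least one potentially active link with some constant probability $q>0$.

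It remains to convert these per-cell links into a single large interference-free set. I would color the subsquares with a constant number $K$ of colors so that any two cells of the same color have boundary separation exceeding $r(n)$; then for any two links drawn from distinct same-color cells, each transmitter is farther than $r(n)$ from the other link's receiver, so the chosen links are pairwise non-interfering and form an independent set in the conflict graph. Since the largest color class contains at least a $1/K$ fraction of the cells that produced a link, linearity of expectation gives $E[L]\ge \frac{1}{K}\cdot\Theta(n)\cdot q=\Omega(n)$. Together with part (i), this yields $E[L]=\Theta(n)$.

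I expect the main obstacle to be the uniform-in-$n$ control of the matching probability $\sum_i f_i c_i$: this is precisely where the hypothesis $\gamma_r,\gamma_c>1$ is essential, since convergence of $\sum_i i^{-\gamma_r}$ and $\sum_i i^{-\gamma_c}$ keeps the Zipf normalization constants bounded and prevents the per-cell success probability from vanishing as the library size $m$ grows. A secondary technical point is verifying that the constant-color spatial separation genuinely certifies an independent set under the protocol interference model, and checking that the mild dependence introduced by fixing the total number of users at $n$ does not disturb the linearity-of-expectation computation.
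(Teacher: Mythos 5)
Your proposal is correct and follows the same overall architecture as the paper's proof: part (i) is the same trivial counting bound, and part (ii) uses the identical tessellation into $\Theta(n)$ subsquares of side $r(n)/\sqrt{2}$, a constant-factor spatial-reuse step (your $K$-coloring is equivalent to the paper's observation that a good cluster blocks at most $16$ others, giving $E[L]\ge E[G]/17$), and the same decisive fact that $\sum_i f_i p_i=\Theta(1)$ when $\gamma_r,\gamma_c>1$ (the paper's Lemma 1(iv)). Where you genuinely diverge is in how the per-cell success probability is bounded below: the paper works with the union of all $k$ caches in a cluster, defines a cluster ``value'' $v(\omega)$, must subtract self-requests, conditions on at least one user caching file $1$, and runs two Chernoff bounds (on the occupancy $K$ and on the number $h$ of users caching file $1$) before arriving at $\sum_{j\ge 2} f_j p_j$. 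You instead fix an ordered pair of distinct users in an occupied cell and note that the probability the first requests exactly what the second caches is $\sum_i f_i c_i=\Theta(1)$; this automatically sidesteps the self-request correction (requester and cacher are distinct by construction) and needs only the elementary bound $\Pr[K\ge 2]=\Theta(1)$ for a Binomial with constant mean, rather than concentration. Your route is shorter and cleaner for this regime; the paper's heavier machinery with cluster values, conditioning, and Azuma/Chernoff concentration is what generalizes to the $\gamma_r<1$ case of Theorem 2, where a single pair no longer suffices and one must aggregate the caches of $\Theta(m^\eta)$ neighbors. Your two flagged caveats are both benign: linearity of expectation is unaffected by the multinomial dependence across cells, and a fixed periodic coloring with separation exceeding $r(n)$ does certify an independent set under the stated protocol model.
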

The first part of the theorem \ref{theorem_1} is trivial since the number of active D2D links can at most scale linearly in the number of users.
The second part indicates that if we choose $r_{opt}(n)=\Theta(\sqrt{\frac{1}{n}})$ and $\gamma_c>1$, $E[L]$ can grow linearly with $n$. There is some simple intuition behind this result: We show that in this regime users are surrounded by a constant number of users in expectation. If the Zipf exponent $\gamma_c$ is greater than one, this suffices to show that the probability that they can find their desired files locally is a non-vanishing constant as $n$ grows. 
Our proof is provided in the Appendix \ref{proof1}.

For the low content reuse region $\gamma_r<1$, we obtain the following result:
\begin{theorem}\label{theorem2}
If $\gamma_r<1$, 
\begin{itemize}
\item[i)] \textbf{Upper bound:} For any caching policy, 
$E[L]=O(\frac{n}{m^{\eta}})$ where $\eta=\frac{1-\gamma_r}{2-\gamma_r}$,
 \item[ii)] \textbf{Achievability:} If $c_3 \sqrt{\frac{m^{\eta+\epsilon}}{n}} \leq r_{opt}(n)\leq c_4 \sqrt{\frac{m^{\eta+\epsilon}}{n}}$ and users cache files randomly and independently according to a Zipf distribution with exponent $\gamma_c$, for any exponent $\eta+\epsilon$,  there exists  $\gamma_c$ such that $E[L]=\Theta(\frac{n}{m^{\eta+\epsilon}})$  where $0<\epsilon<\frac{1}{6}$ and $\gamma_c$ is a solution to the following equation
 \[\frac{(1-\gamma_r)\gamma_c}{1-\gamma_r+\gamma_c}=\eta+\epsilon.\]
 %\item[iii)] \textbf{Centralized Achievability:} Given that  
 %$r(n)=\Theta(\sqrt{\frac{m^{\eta}}{n}})$, there exists a centralized strategy such that 
 %\[E[L]= \Theta(\frac{n}{m^{\eta}}).\]
\end{itemize}
\end{theorem}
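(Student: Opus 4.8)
The plan is to sandwich $E[L]$ between two competing quantities and then pick the radius that balances them. Write $N:=n\pi r(n)^2$ for the expected number of nodes in a disk of radius $r(n)$, so that $r(n)=\Theta(\sqrt{m^{\eta+\epsilon}/n})$ means $N=\Theta(m^{\eta+\epsilon})$; set $\alpha:=\eta+\epsilon$. Two facts hold for \emph{every} policy and radius. First, a packing bound: an active link excludes a region of area $\Theta(r^2)$ around its receiver and, under the protocol model, these exclusion regions are disjoint up to constants, so at most $O(1/r^2)=O(n/N)$ links coexist and $L\le c\,n/N$ surely. Second, a coverage bound: distinct active links have distinct receivers, so $L$ is at most the number of users whose requested file sits in some neighbor's cache, whence $E[L]\le n\,P_{\mathrm{hit}}$ with $P_{\mathrm{hit}}=\sum_{i=1}^m f_i\,\Pr[\text{file }i\text{ cached within }r]$.

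For part (i) I would bound $P_{\mathrm{hit}}$ across all policies. Any policy induces marginals $p_i$ (the chance a fixed neighbor caches file $i$) with $\sum_i p_i=1$, and a union bound gives $\Pr[\text{file }i\text{ found}]\le\min(Np_i,1)$, so $P_{\mathrm{hit}}\le\sum_i f_i\min(Np_i,1)$. Maximizing the right side subject to $\sum_i p_i=1$ is a fractional knapsack whose optimum saturates the $N$ most popular files ($p_i=1/N$ for $i\le N$), yielding $P_{\mathrm{hit}}\le\sum_{i=1}^{N}f_i=\Theta((N/m)^{1-\gamma_r})$ from $\sum_{i=1}^{N}i^{-\gamma_r}=\Theta(N^{1-\gamma_r})$ for $\gamma_r<1$. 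Hence $E[L]\le\min(c\,n/N,\;c'\,n(N/m)^{1-\gamma_r})$; the first term falls and the second rises in $N$, so the best $N$ sits at their crossing $N=m^{(1-\gamma_r)/(2-\gamma_r)}=m^{\eta}$, giving $E[L]=O(n/m^{\eta})$ and pinpointing $r_{opt}(n)=\Theta(\sqrt{m^{\eta}/n})$.

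For achievability I would evaluate $P_{\mathrm{hit}}$ for a Zipf$(\gamma_c)$ cache at $N=\Theta(m^{\alpha})$. Using $\Pr[\text{file }i\text{ found}]\approx 1-e^{-Np_i}$, file $i$ is effectively covered iff $Np_i\gtrsim 1$. The targeted $\gamma_c$ turns out to exceed $1$ (it equals $1$ at $\epsilon=0$), so $\sum_j j^{-\gamma_c}=\Theta(1)$, $p_i=\Theta(i^{-\gamma_c})$, and the coverage threshold is $i^{\ast}=\Theta(N^{1/\gamma_c})=\Theta(m^{\alpha/\gamma_c})$. Then $P_{\mathrm{hit}}=\Theta(\sum_{i\le i^\ast}f_i)=\Theta((i^\ast/m)^{1-\gamma_r})=\Theta(m^{(1-\gamma_r)(\alpha/\gamma_c-1)})$. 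Demanding $P_{\mathrm{hit}}=\Theta(m^{-\alpha})$ forces $(1-\gamma_r)(\alpha/\gamma_c-1)=-\alpha$, which rearranges precisely to $\frac{(1-\gamma_r)\gamma_c}{1-\gamma_r+\gamma_c}=\alpha$; the window $0<\epsilon<\frac16$ keeps $i^\ast\le m$ and $\gamma_c>1$ and supplies slack for the estimates. With this $\gamma_c$ the expected number of served users is $n\,P_{\mathrm{hit}}=\Theta(n/m^{\alpha})$, already matching the packing ceiling.

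The hardest step is to realize these served users as an actual interference-free set of size $\Omega(n/m^{\alpha})$, i.e. to lower-bound the conflict graph's maximum independent set rather than merely count potential links. The plan is a tessellation-and-coloring argument: split the cell into $\Theta(1/r^2)=\Theta(n/m^\alpha)$ squares of side $\Theta(r)$ and color them with $O(1)$ colors so that equally colored squares lie farther apart than the interference range; then activating one link per square within a single color class produces a conflict-free set. It remains to show that a constant fraction of squares admit an activatable link. A square contains $\Theta(N)=\Theta(m^{\alpha})$ users in expectation, each served with probability $\Theta(m^{-\alpha})$, so the expected count of served users per square is $\Theta(1)$; the crux is a second-moment argument converting this $\Theta(1)$ mean into an $\Omega(1)$ probability that the square holds a served user whose source also lies within range, while controlling the dependence between the file-request event and the cache-hit event as well as the boundary squares. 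Summing over the $\Theta(n/m^\alpha)$ squares in one color class gives $E[L]=\Omega(n/m^{\alpha})$, which with the packing bound yields $E[L]=\Theta(n/m^{\eta+\epsilon})$, within an arbitrarily small $\epsilon$ of the upper-bound exponent $\eta$.
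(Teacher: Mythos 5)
Your upper bound is correct and takes a genuinely different, more elementary route than the paper. The paper tessellates the cell into clusters, bounds $\Pr[\mathrm{good}\mid k,\omega]\le 1-(1-\sum_{j=1}^{k}f_j)^k$ via the FKG inequality applied to the decreasing events $\bar A_i(\omega)$, and then grinds through four separate regimes of $r(n)$. You instead count served receivers by linearity of expectation, so no correlation inequality is needed: the union bound $\Pr[\text{file }i\text{ found}]\le\min(Np_i,1)$ together with the constraint $\sum_i p_i=1$ reduces the worst case over all caching policies to a fractional knapsack whose optimum is $\sum_{i=1}^{N}f_i$, and balancing $n/N$ against $n(N/m)^{1-\gamma_r}$ over all $N$ replaces the paper's case analysis in one stroke. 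Both arguments lean on the same folklore packing bound $L=O(1/r^2)$ (the paper's ``one active link per cluster, each blocking at most $16$ others''), so you are not losing rigor relative to the paper there. This part I would accept as a cleaner proof of (i).

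The achievability part has the right skeleton --- the radius $r=\Theta(\sqrt{m^{\alpha}/n})$, the coverage threshold $i^{*}=\Theta(m^{\alpha/\gamma_c})$, and the derivation of $\frac{(1-\gamma_r)\gamma_c}{1-\gamma_r+\gamma_c}=\alpha$ all match the paper --- but the step you yourself flag as ``the crux'' is a genuine gap, and it is exactly where the paper spends its effort. Converting a per-square mean of $\Theta(1)$ served users into a per-square success probability of $\Omega(1)$ by a second-moment argument requires $E[\#^2]=O(1)$; since the hit indicators of the $k=\Theta(m^{\alpha})$ users in a square are conditionally i.i.d.\ given the cache contents, with conditional hit probability equal to the cluster value $v(\omega)$, this reduces to showing $E[v^2]=O(E[v]^2)$, i.e.\ concentration of $v$ around its mean $\Theta(m^{-\alpha})$. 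That is precisely the paper's Azuma--Hoeffding step, and it only works after truncating the library at $q=m^{\alpha/\gamma_c}$ so that the martingale differences are bounded by $f_q$ --- a device absent from your sketch --- and it is this step, via the condition $\gamma_c<2$, that produces the constraint $\epsilon<\tfrac16$. Your attribution of the $\tfrac16$ window to ``keeping $i^{*}\le m$ and $\gamma_c>1$'' is not where it comes from: $i^{*}\le m$ is automatic and $\gamma_c>1$ holds for every $\epsilon>0$. You should also verify (as the paper does by subtracting $\max_i f_{\omega_i}$) that self-requests do not carry the hit probability; here this is a routine check since $\sum_i f_ip_i=\Theta(m^{-(1-\gamma_r)})=o(m^{-\alpha})$, but it belongs in the proof. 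Without the truncation-plus-concentration argument the achievability claim is not established.
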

Our proof is provided in the Appendix \ref{proof2}.

%{\color{red} NG:I can have another part in theorem 2 and 1 saying that there exists a centralized that matched scaling. Do you think it would be better?} 

We show that when there is low content reuse, linear scaling in frequency re-use is not possible.
At a high level, in order to achieve the optimal scaling, on average a user should be surrounded by $\Theta(m^{\eta})$ users. Comparing with the first region where $\gamma_r> 1$, we can conclude that when there is less redundancy, users have to see more users in the neighborhood to find their desired files locally.

\begin{theorem}\label{theorem_3}
If $\gamma_r=1$
\begin{itemize}
\item[i)] \textbf{Upper bound:}
For any $r(n)$,
 $E[L]=O(\frac{n\log\log(m))}{\log(m)})$ 
  \item[ii)] \textbf{Achievability:} Given that  
 $c_5 \sqrt{\frac{\log(m)}{n\log\log(m)}} \leq r(n)\leq c_6\sqrt{\frac{\log(m)}{n\log\log(m)}}$, there exists a centralized strategy such that 
 \[E[L]= \Theta(\frac{n\log\log(m)}{\log(m)}).\]
\end{itemize}
\end{theorem}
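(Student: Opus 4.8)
The plan is to prove both parts by balancing two opposing constraints on the number of simultaneously active links: an interference (packing) constraint that bounds $L$ from above through $r(n)$, and a content-availability (finding) constraint that limits how many receivers can locate their requested file within a disk of radius $r(n)$. Write $K = n\pi r(n)^2$ for the expected number of users in a collaboration disk. The two constraints pull in opposite directions in $K$, and the optimal operating point is where they balance.

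For the upper bound (i), I would first establish the packing bound: under the protocol model each active receiver reserves an exclusion disk of radius $\Theta(r(n))$, and a standard area argument as in \cite{gupta2000capacity} shows that any independent set in the conflict graph has size $O(1/r(n)^2)=O(n/K)$; since this holds for every realization it gives $E[L]=O(n/K)$. Next I would establish the finding bound, valid for any (even centralized) caching policy: condition on a tagged receiver $v$ having $N_v$ neighbors caching a set $S_v$ of files with $|S_v|\le N_v$; since the caches are populated from the popularity statistics before the i.i.d.\ requests are revealed, the probability $v$ finds its file is $\sum_{i\in S_v}f_i\le\sum_{i=1}^{N_v}f_i=H_{N_v}/H_m$, the sum being maximized by caching the $N_v$ most popular files. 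Because $H_k\approx\log k$ is concave, Jensen's inequality gives $P(\text{find})\le E[H_{N_v}]/H_m\le H_{E[N_v]}/H_m=H_K/H_m=\Theta(\log K/\log m)$, so $E[L]\le\sum_v P(\text{find})=O(n\log K/\log m)$. Combining, $E[L]=O\big(\min(n/K,\; n\log K/\log m)\big)$ for every $r(n)$. A short case analysis shows this minimum is $O(n\log\log m/\log m)$ uniformly in $K$: the two branches cross at $K^\ast$ solving $K\log K=\log m$, i.e.\ $K^\ast=\Theta(\log m/\log\log m)$, where both branches equal $\Theta(n\log\log m/\log m)$; for $K\le K^\ast$ the finding branch dominates and $\log K\le\log K^\ast\le\log\log m$, while for $K\ge K^\ast$ the packing branch dominates and $n/K\le n/K^\ast$.

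For achievability (ii), I would set $r(n)=\Theta(\sqrt{\log m/(n\log\log m)})$ so that $K=\Theta(\log m/\log\log m)$, and tile the unit cell into sub-squares of side $\Theta(r(n))$ chosen small enough that any two users in the same sub-square are within $r(n)$; there are $\Theta(1/r(n)^2)=\Theta(n/K)$ such sub-squares, each containing $\Theta(K)$ users in expectation. The centralized strategy assigns the $\Theta(K)$ most popular files, one per user, inside each sub-square. A user is then ``satisfied'' exactly when its request lands among these top files, which happens with probability $p=H_{\Theta(K)}/H_m=\Theta(\log K/\log m)=\Theta(\log\log m/\log m)$; since $Kp=\Theta(1)$, the expected number of satisfied users per sub-square is $\Theta(1)$, and independence of requests gives $P(\text{a sub-square contains a satisfied user})\ge 1-(1-p)^{\Theta(K)}=\Theta(1)$. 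Finally, to convert satisfied receivers into simultaneously active links I would color the sub-squares with a constant number of colors so that same-colored sub-squares are separated by more than $2r(n)$ (hence mutually non-interfering under the protocol model) and activate one satisfied link in every sub-square of the color class containing the most satisfied sub-squares. This yields $E[L]=\Omega(n/K)=\Omega(n\log\log m/\log m)$, and combined with part (i) gives $E[L]=\Theta(n\log\log m/\log m)$.

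The main obstacle is the achievability construction rather than the bound itself: one must simultaneously (a) guarantee that each sub-square actually holds enough users to store the full top-$\Theta(K)$ list despite the random, Poisson-like fluctuation in sub-square occupancy, which requires the occupancy to exceed a constant fraction of $K$ with constant probability and is handled by a concentration estimate, and (b) argue that the coloring turns a constant fraction of the $\Theta(n/K)$ satisfied sub-squares into a genuine interference-free independent set in the conflict graph, so that the count of satisfied receivers and the count of schedulable links agree up to constants. The delicate quantitative point throughout is the transcendental balance $K\log K=\log m$, whose solution $K^\ast=\Theta(\log m/\log\log m)$ is exactly what produces the extra $\log\log m$ factor distinguishing the critical case $\gamma_r=1$ from the two regimes of Theorems~\ref{theorem_1} and \ref{theorem2}.
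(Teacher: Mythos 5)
Your proposal is correct, and its two halves relate to the paper's proof differently. The achievability argument is essentially the paper's: the same radius $r(n)=\Theta(\sqrt{\log m/(n\log\log m)})$, the same centralized policy (the $k$ users of a sub-square cache the $k$ most popular files, one each), the same estimate $1-(1-p)^{\Theta(K)}=\Theta(1)$ with $Kp=\Theta(1)$ via part (v) of Lemma~\ref{scaling}, the same Chernoff control of sub-square occupancy, and the same constant-factor conversion of good sub-squares into an independent set (the paper's ``one good cluster blocks at most $16$ others'' count is interchangeable with your bounded coloring); the only detail you omit is excluding self-requests, which the paper handles by subtracting $f_1$ from the cluster value and which costs only a constant. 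Your upper bound, by contrast, takes a genuinely cleaner route. The paper bounds the probability that a cluster contains at least one satisfied user, which forces it to handle the dependence among the users' ``find'' events via the FKG inequality and then to run a three-region case analysis over $r(n)$ with separate Chernoff estimates. You instead bound $E[L]$ by the expected number of satisfied receivers using linearity of expectation, apply the pointwise bound $\sum_{i\in S_v}f_i\le H_{N_v}/H_m$ (valid for any cache placement chosen independently of the requests) together with Jensen on the concave map $k\mapsto H_k$, and intersect with the packing bound $L=O(1/r(n)^2)$; maximizing $\min\bigl(n/K,\,n(\log K+1)/\log m\bigr)$ over $K$ at the balance point $K\log K=\Theta(\log m)$ reproduces the paper's three regions in one stroke, with no FKG and no per-region computations. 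Two small cautions: write $\Theta(\log K+1)$ rather than $\Theta(\log K)$ so the finding branch is valid for constant $K$; and note that your exclusion-disk justification of $L=O(1/r(n)^2)$ (like the paper's claim that each cluster hosts at most one active link) implicitly assumes a guard zone, since under the literal protocol model with interference range exactly $r(n)$ the receivers of two simultaneously active links need not be separated --- this is a looseness shared with, not introduced beyond, the paper.
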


\section{Discussion and Conclusions}

\label{sec:discussion}

As mentioned in Sec. I, the study of scaling laws of the capacity of wireless networks has received significant attention since the pioneering work by Gupta and Kumar~\cite{gupta2000capacity} (\textit{e.g.} see \cite{Tse,grossglauser2001mobility,franceschetti2009capacity}).
 The first result was pessimistic: if $n$ nodes are trying to communicate (say by forming $n/2$ pairs), since the typical distance in a 2D random network will involve  roughly $\Theta(\sqrt{n})$ hops, the throughput per node must vanish, approximately scaling as $1/\sqrt{n}$. There are, of course, sophisticated arguments performing rigorous analysis that sharpens the bounds and numerous interesting model extensions. One that is particularly relevant to this project is the work by Grossglauser and Tse~\cite{grossglauser2001mobility} that showed that if the nodes have infinite storage capacity, full mobility and 
there is no concern about delay, constant (non-vanishing) throughput per node can be sustained as the network scales. 

Despite the significant amount of work on ad hoc networks, there has been very little work on 
file sharing and content distribution over wireless (\cite{femtocaching,chen2011file}) beyond the multiple unicast traffic patters introduced in~\cite{gupta2000capacity}. 
Our result shows that if there is sufficient content reuse, caching fundamentally changes the picture: non-vanishing throughput per node can be achieved, even with constant storage and delay, and without any mobility. 
%In our recent work\cite{ICC_Workshop} we empirically analyzed the optimal collaboration distance for fixed number of users. 

On a more technical note, the most surprising result is perhaps the fact that in Theorem 2, a simple distributed policy can match the optimal scaling behavior $E[L]=O(\frac{n}{m^{\eta}})$. This means that even if it were possible for a central controller to impose on the devices what to store, the scaling behavior could not improve beyond the random caching policy (though, of course, the actual numerical values for finite device density could be different). Further, for both regimes of $\gamma_r$, the distributed caching policy exponent $\gamma_c$ should not match the request Zipf exponent $\gamma_r$, something that we found quite counter intuitive. 

Overall, even if linear frequency re-use is not possible, we expect the scaling of the library $m$ to be quite small (typically logarithmic) in the number of users $n$. In this case we obtain near-linear (up to logarithmic factors) growth in the number of D2D links for the full spectrum of Zipf exponents.  Our results are encouraging and show that device-based caching and D2D communications can lead to drastic increase of wireless video throughput; and that the benefits increase as the number of participants increases. This in turn implies that the highest throughput gains are achieved in those areas where they are most needed, i.e., where the devices are most concentrated.

\appendices

\section{Proof of Theorem 1}\label{proof1}
The first part of the theorem is easy to see since the number of D2D links cannot exceed the number of users.
Next, we show the second part of the theorem.

For the second part of the theorem, we introduce virtual clusters and we show that the number of virtual clusters that can be potentially active, called \textit{good clusters}, scales like the number of active links. To find the lower bound for good clusters, we limit users to communicate with neighbors in the same cluster. 
Then, we express the probability of good cluster as function of stored files by users within the cluster.
Excluding \textit{self-requests}, \textit{i.e.}, when users find their request files in their own caches, we find a lower bound for good clusters.  We further define  a \textit{value} for each cluster which is the sum of probability of stored files by users. Then we express the probability of goodness as a function of value of clusters.  Using Chernoff bound, we finalize our proof.

\subsection{Active links versus good clusters}
%In this part of the proof, by introducing clustering we show that the number of active links scales like the number of good links. To be more precise, 
We divide the cell into $\frac{2}{r(n)^2}$ virtual square clusters. 
 Figure \ref{layout} shows the virtual clusters in the cell. The cell side is normalized to $1$ and the side of each cluster is equal to $\frac{r(n)}{\sqrt{2}}$. Thus, all users within a cluster can communicate with each other. Based on our interference model, in each cluster only one link can be activated. When there is an active D2D link within a cluster, we call the cluster \emph{good}. But not all good clusters can be activated simultaneously. According to protocol model, one good cluster can at most block $16$ clusters (see Figure \ref{interfer}). The maximum interference happens when a user in the corner of a cluster transmits a file to a user in the opposite corner. So, we have
 \begin{equation} \label{blocking}
 E[L]\ge \frac{E[G]}{(16+1)}
 \end{equation}
 where $E[G]$ is the expected number of good clusters.  
 
 Since the number of active links scales like the number of good clusters, to prove the theorem it is enough to show that constant fraction of virtual clusters are good.
  This is because $r(n)=\Theta(\sqrt{\frac{1}{n}})$ and there are $\Theta(n)$ virtual clusters in the cell.
  
 \subsection{Limiting users}
   Since we want to find the lower bound for $E[L]$, we can limit users to communicate with users in  virtual clusters they belong to. 
 Hence,
 \begin{align} \label{EG11}
  E[G]
 &\geq \frac{2}{r(n)^2}\sum_{k=0}^{n}{\Pr[\text{good}|k]\Pr[K=k]},
 \end{align}
 where $\frac{2}{r(n)^2}$ is the total number of virtual clusters. $K$ is the number of users in the cluster, which is a binomial random variable with $n$ trials and probability of $\frac{r(n)^2}{2}$, \textit{i.e.,} $K=B(n,\frac{r(n)^2}{2})$.  $\Pr[K=k]$ is the probability that there are $k$ users in the cluster and $\Pr[\text{good}|k]$ is the probability that the cluster is good conditioned on $k$. 
 \subsection{Probability of goodness and stored files}
 To show the result, we should prove that the summation in (\ref{EG11}), \textit{i.e.}, the probability that a cluster is good, does not vanish as $n$ goes to infinity. The probability that a cluster is good depends on what users cache. Therefore,
\begin{align}\label{EG111}
\nonumber  E[G]&\geq \frac{2}{r(n)^2}\sum_{k=0}^{n}{\Pr[K=k]}\\
&\times  \sum_{ \big\{\omega\, \big| |\omega|=k\big\}} \Pr[\text{good}|k,\omega]\Pr[\omega],
 \end{align} 
 where ${{\omega}}$ is a random vector of stored files by users in the cluster and $|\omega|$ denotes the length of vector $\omega$. %$\Pr[\text{good}|\omega,k]$ is the probability that a cluster is good conditioned of vector of stored files and the number of users in the cluster.  
The  $i$th element of $\omega$ denoted by ${\omega_i}\in \{1,2,3,\ldots, m\}$ indicates what user $i$ in the cluster stores.

 \begin{figure}
\centering
\subfigure[]{
\centering \includegraphics[width=4.7cm]{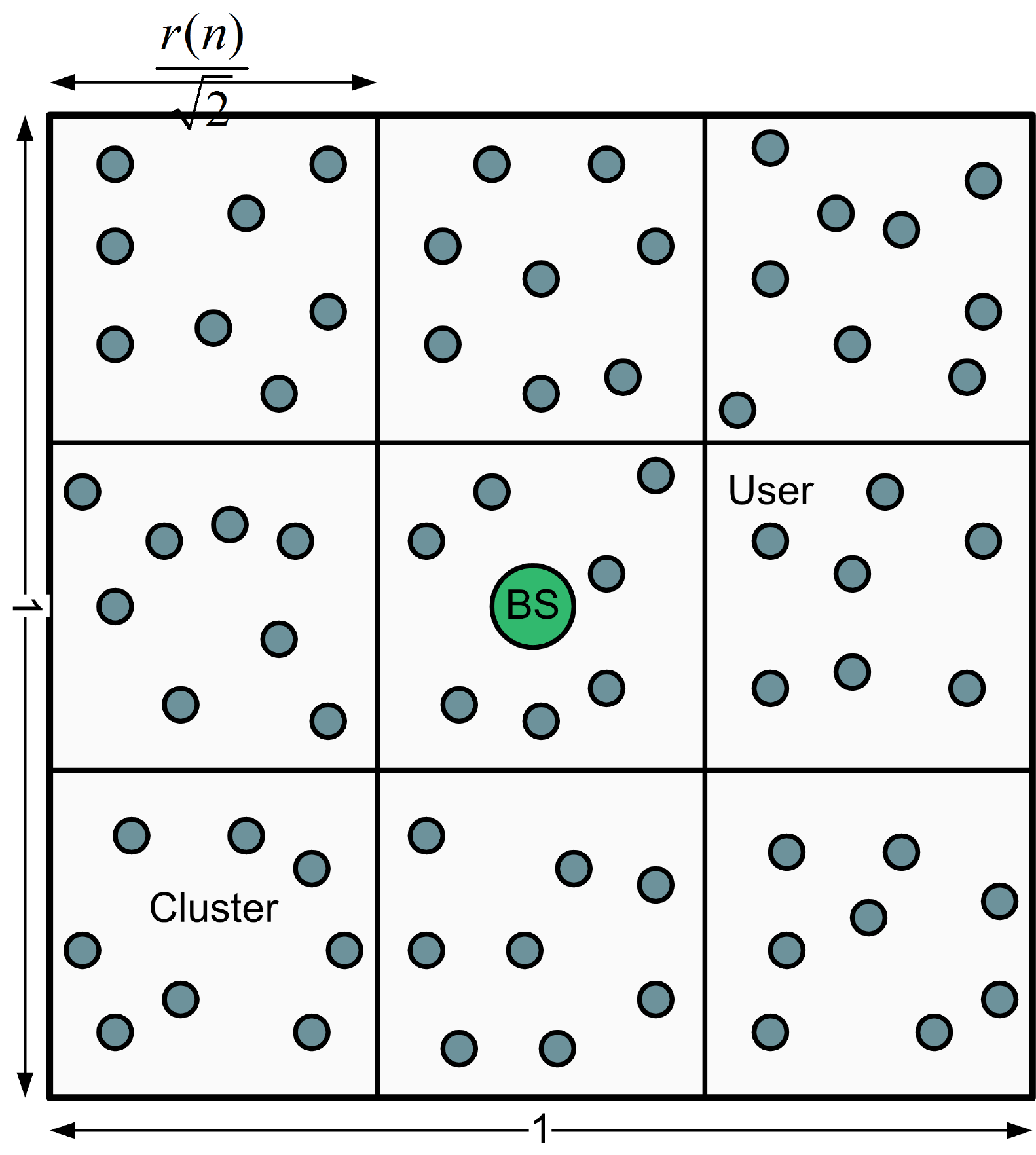}
\label{layout}
}
%\subfigure[Conflict graph based on Figure \ref{fig:subfig1}]{
%\centering \includegraphics[scale=0.52]{conflict_graph2.pdf}
%\label{fig:subfig2}
%}
\subfigure[ ]{
\centering \includegraphics[width=4.7cm]{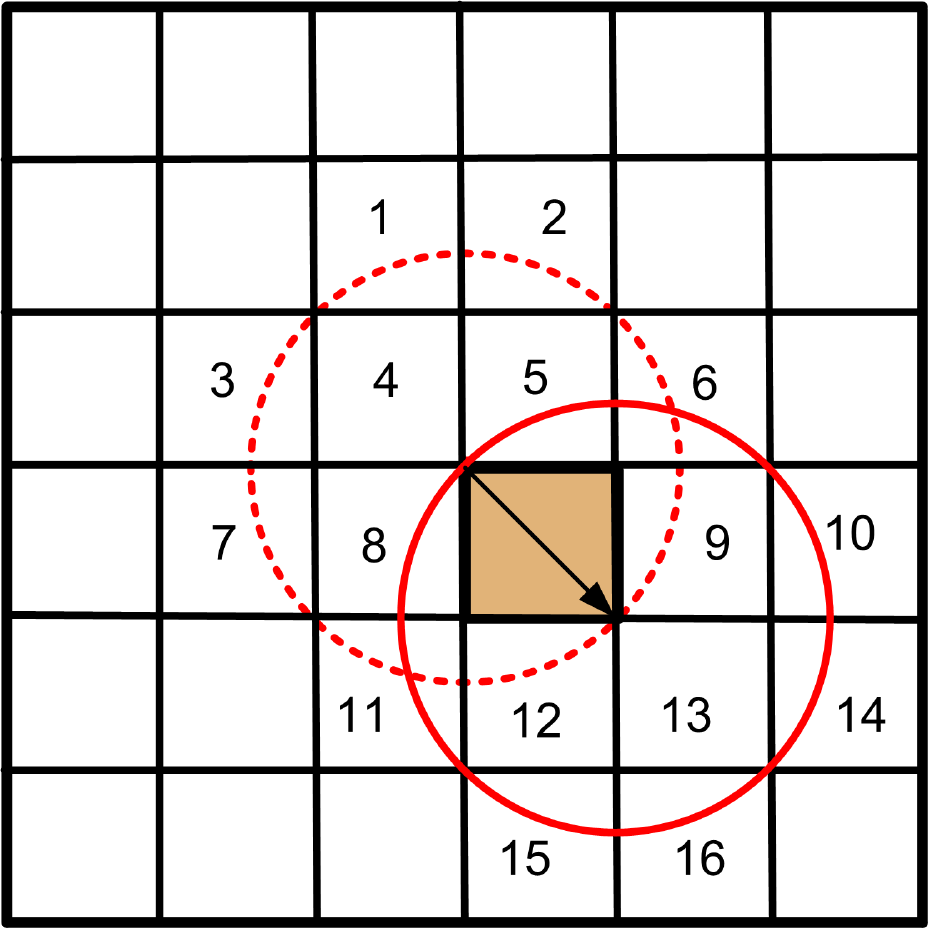}
\label{interfer}
}
\caption{a)~Dividing cell into virtual clusters. b)~In the worst case, a good cluster can block at most $16$ clusters. In the dashed circle, receiving is not possible and in the solid circle, transmission is not allowed.}
\end{figure}

 %Where $\bf{p}$ is the set of files cached by users in the cluster. 
For each $\omega$, we define a value:
\begin{equation}\label{value}
v(\omega)=\sum_{i\in \tilde{ \omega}} f_i,
\end{equation}
where $\tilde{ \omega}=\cup_{j=1}^{|\omega|}\omega_j$ and $\cup$ is the union operation. Actually $v(\omega)$ is the sum of popularities of the union of files in $\omega$.
The cluster is considered to be good if at least a user $i$ in the cluster requests one of the files in $\tilde{ \omega}-\{\omega_i\}$. 
\subsection{Excluding self request}
  A user might find the file it requests in its own cache; in this case clearly no D2D communication will be activated by this user.  We call these cased {\em self-requests}. Accounting for these self-requests,
 the probability that user $i$ finds its request files locally within the cluster is $(v(\omega)-f_{\omega_i})$. Thus, we obtain:
\begin{align}\label{pr[good]111}
 \Pr[\text{good}|k,\omega]
&\ge 1-\big(1-(v(\omega)-\max_{i} f_{\omega_i})\big)^k.
\end{align}
 Let us only consider cases where at least one user in the cluster caches file $1$ (the most popular file). Then, from (\ref{EG111}) and (\ref{pr[good]111}),  the following lower bound is achieved:\\
 \begin{align}
 \nonumber E[G]&\ge \frac{2}{r(n)^2}\sum_{k=1}^{n}{\Pr[K=k]}\\
 &\times \sum_{\omega\in {\bf{x}}}\big[1-(1-(v(\omega)-f_{1}))^k \big]\Pr[\omega].\label{E[G]22}
 \end{align}
 where ${\bf{x}}={ \big\{\omega\, \big| |\omega|=k\, \, \text{and} \,\,1\in \tilde{ \omega}\big \}} $.
 %The second summation in the above equation is expected value of $1-(1-(v(\omega)-f_1))^k$ in respect with $\omega$ conditioned on 
% the event that at least one of $k$ users in the cluster caches file $1$ denoted by $A_1^k$. 
%Let us further define a random variable $v$ which is sum of popularities of the union of files stored by users in the cluster. 
\subsection{Probability of goodness and value of clusters}
Instead of taking expectation with respect to $\omega$, we take expectation with respect to $v$, \textit{i.e.,} the value of a cluster.
 Then, %in equation  (\ref{E[G]22}), we can take the expectation with respect to $v$, \textit{i.e.,} 
  \begin{align}
 \nonumber E[G]&\ge  \frac{2}{r(n)^2}\sum_{k=1}^{n}{\Pr[K=k] E_v[1-(1-(v-f_1))^k|A_1^k]}\\
  \nonumber& \ge \frac{2}{r(n)^2}\sum_{k=1}^{n}{\Pr[K=k] E_v[(v-f_1)|A_1^k]},
 \end{align}
 %In the last line of the above equations, we use Bernoulli's inequality.    
 where $A_{1}^k$ is the event that at least one of $k$ users in the cluster caches file $1$ and $E_v[.]$ is the expectation with respect to $v$. Let $A_{1,h}^k$ for $1 \leq h\leq  k$ denote the event that $h$ users out of $k$ users in the cluster cache file $1$. Then, we get:
 \begin{align}\label{E[G]2}
  \nonumber E[G]& \ge \frac{2}{r(n)^2}\sum_{k=1}^{n}\Pr[K=k] \\
  &\times \sum_{h=1}^{k}E_v[(v-f_1)|A_{1,h}^k] \times \Pr[A_{1,h}^k],
 \end{align}
 where $\Pr[A_{1,h}^k]=\left( \begin{array}{c}
k  \\
h \end{array} \right) (p_1)^h
(1-p_1)^{k-h}$ and $p_j$ represents the probability that file $j$ is cached by a user based on Zipf distribution with exponent $\gamma_c$.
 To calculate $E_v[(v-f_1)|A_{1,h}^k]$, we define an indicator function ${\bf 1}_{j}$ for each file $j\ge 2$.  ${\bf 1}_{j}$ is equal to 1 if at least one user in the cluster stores file $j$. Hence,
 \begin {align}
 \nonumber E_v[(v-f_1)|A_{1,h}^k]&=E\big [\sum_{j=2}^m f_j {\bf 1}_{j}|A_{1,h}^k\big]\\
\nonumber & =\sum_{j=2}^m f_j (1-(1-p_j)^{k-h}).
 \end{align}
 \subsection{Chernoff bound}
 To show that the probability of a cluster is good is not vanishing, we use Chernoff bound. First, we limit the interval $k$ to an interval around its average.
 By substituting $E_v[(v-f_1)|A_{1,h}^k]$ in (\ref{E[G]2}), 
 \begin{align}
 \nonumber E[G]
  &\ge \frac{2}{r(n)^2}\sum_{k\in I}\Pr[K=k] \\
 & \times \sum_{h=1}^k  \sum_{j=2}^{m} f_j (1-(1-p_j)^{k-h}) 
 \Pr[A_{1,h}^k],
 \label{E[G]theorem1}
 \end{align} 
 where for any $0<\delta<1$ the interval $I= [nr(n)^2(1-\delta)/2,nr(n)^2(1+\delta)/2]$. Define $k^*\in I$ such that it minimizes the expression in the last line of  (\ref{E[G]theorem1}).
% \begin{align}
 %\nonumber k^*&\triangleq\arg\min_{nr(n)^2(1-\delta) \leq k\leq nr(n)^2(1+\delta)} \\
%&\big [\sum_{h=1}^k   \sum_{j=2}^{m} f_j (1-(1-p_j)^{k-h}) 
 %\left( \begin{array}{c}
%k  \\
%h \end{array} \right) (p_1)^h
%(1-p_1)^{k-h}\big]
 %\end{align}
 Since $r(n)=\Theta(\sqrt{\frac{1}{n}})$, $k^*$ is $\Theta(1)$. Then from (\ref{E[G]theorem1}), we have:
 \begin{align}
 \nonumber E[G] & \ge \frac{2}{r(n)^2}\Pr[
 K\in I]  \\
 &\label{before_chernoff1} \times \sum_{h=1}^{k^*}\Big[ \Pr[A_{1,h}^{k^*}]
 \sum_{j=2}^m f_j(1-(1-p_j)^{k^*-h})\Big]\\
&\nonumber  \ge   \frac{2}{r(n)^2} \left(1-2\exp\left({-nr(n)^2\delta^2/6}\right)\right)\\
\times & \sum_{h=k^*p_1(1-\delta_1)}^{k^*p_1(1+\delta_1)}\Big[ \Pr[A_{1,h}^{k^*}] 
\sum_{j=2}^m f_j(1-(1-p_j)^{k^*-h}) \Big],\label{chernoff1}
 \end{align} 
where $0<\delta_1<1$. 
We apply the Chernoff bound in (\ref{before_chernoff1}) to derive (\ref{chernoff1}) \cite{chernoff1952measure}. Since the exponent $nr(n)^2\delta^2/6$ is $\Theta(1)$, we can select the constant $c_1$ such that the term $1-2\exp\left({-nr(n)^2\delta^2/6}\right)$ becomes positive.% which implies $c_1>\frac{3\ln 2}{\delta^2}$. We select $\delta_1$ such that $k^*-h$ becomes greater than $1$. 
%Thus, we have:
%\begin {align}
%k^*>k^*p_1(1+\delta_1)+1\nonumber
%\end{align}
%The above inequality is reduced to $\delta_1<\frac{k^*-1}{k^*}\zeta(\gamma_c)-1$ where $\zeta(\gamma)=\sum\limits_{j=1}{\frac{1}{j^{\gamma}}}$ is the Riemann zeta function \cite{Riemann}.
%Using the range of $k^*$, we have
%\begin{align}
%\delta_1<\frac{c_1(1-\delta)-1}{c_1(1-\delta)}\zeta(\gamma_c)-1
%\end{align}

Let us define $h^*\in [k^*p_1(1-\delta_1),k^* p_1(1+\delta_1)]$ such that it minimizes the inner summation of  (\ref{chernoff1}), \textit{i.e.,} $\sum_{j=2}^m f_j(1-(1-p_j)^{k^*-h})$.
From (\ref{zipf}), $p_1$ is $\frac{1}{H(\gamma_c,1,m)}$ where function $H$ is defined in lemma \ref{scaling} in Appendix {Some preliminary lemmas}. Lemma \ref{scaling} implies that $p_1=\Theta(1)$ and as a result, $h^*$ is also $\Theta(1)$.
Using the Chernoff bound for random variable $h$ in (\ref{chernoff1}), we get:
\begin{align} \nonumber 
 E[G]&\ge \frac{2}{r(n)^2} \left(1-2\exp\left({-nr(n)^2\delta^2/6}\right)\right) \\
 &\times \left(1-2\exp\left({-k^*p_1\delta_1^2/3}\right)\right)
\sum_{j=2}^m f_j(1-(1-p_j)^{k^*-h^*}).\label{summation}
\end{align}
%From (\ref{zipf}), $p_{1}=\frac{1}{\sum_{j=1}^m \frac{1}{j^{\gamma_c}}}=\frac{1}{H(\gamma_c,1,m)}$ where $H(\gamma,a,b)=\sum_{j=a}^{b}\frac{1}{j^{\gamma}}$. Applying lemma \ref{scaling}, $p_1$ is $\Theta(1)$ which also implies that
   $k^*-h^*$  should be greater than $1$ which results in a constant lower bound for $c_1$. The second exponent, \textit{i.e.,} $k^*p_1\delta_1^2/3$ is $\Theta(1)$. Therefore, the term  $\left(1-2\exp\left({-k^*p_1\delta_1^2/3}\right)\right)$ is a positive constant if  $c_1$ is large enough. 
   Further, the summation in (\ref{summation}) satisfies 
\[\sum_{j=2}^m f_j(1-(1-p_j)^{k^*-h^*}) >\sum_{j=2}^m f_jp_j.\]
To show that $E[G]$ scales linearly with $n$, the term  
$\sum_{j=2}^m f_jp_j$ should not be vanishing as $n$ goes to infinity.  
Using part (iv) of lemma \ref{scaling}, we can see that
if $\gamma_r,\gamma_c>1$, 
$\sum_{j=2}^{m}f_j p_j=\Theta(1)$.

%For the third part of the lemma, we should present a centralized 

%\appendices
\section{Proof of Theorem 2}\label{proof2}
To show the first part of the theorem, like the proof of theorem $1$, we use virtual clusters. We show that the number active links can be at most equal to the number of good clusters. We state the probability of goodness as a function of stored files. To be more precise, we express this probability as a intersection of some decreasing events. Then, we use  FKG inequality, to find an upper bound for probability of goodness. Finally, we divide the whole range of $r(n)$ into four non overlapping regions and show the upper bound for all regions. 
%For the first part of the theorem, we should show, no matter how we choose $r(n)$, the number of D2D links cannot scale better than $\frac{n}{m^{\eta}}$ where $\eta=\frac{(1+\epsilon)(1-\gamma_r)}{2-\gamma_r+\epsilon}$.
\subsection{Active links versus good clusters}
To show the first part of the theorem, as in proof of the theorem $1$, we divide the cell into $\frac{2}{r(n)^2}$ virtual square clusters. 
  %The cell side is normalized to $1$ and the side of each cluster is equal to $\frac{r(n)}{\sqrt{2}}$. 
  All users within a cluster can communicate with each other. Based on the protocol model, in each cluster only one link can be activated. %Figure \ref{layout} shows the virtual clusters in the cell. 
  A stated before, when there is an active D2D link within a cluster, we call the cluster good. In the best case, all the good clusters can be activated simultaneously. Hence,
 \[E[L]\leq E[G],\] 
where $E[G]$ is average number of good clusters. All users can look for their desired files not only in their own clusters but in the caches of all users in their vicinities.  The maximum area that can be covered by all users in a cluster cannot be larger than $\alpha r(n)^2$ where $\alpha \triangleq(\frac{1}{\sqrt{2}}+2)^2$  (the area of dashed square in Figure \ref{cover}). Therefore, %Accounting for protocol model, the number of active D2D links is less  than the number of good clusters, i.e.,  
 \begin{align}\label{EL}
  E[L]\leq \frac{2}{r(n)^2}\sum_{{k}=0}^{n}{\Pr[\text{good}|k]Pr[{K}={k}]},
 \end{align}
 where  %$\frac{2}{r(n)^2}$ is the total number of virtual clusters. 
  $K$ is the the number of users in dashed square (called \textit{maximum square}) in Figure \ref{cover} which is 
 binomial random variable with $n$ trials and probability of $\alpha {r(n)^2}$,  ${K}=B(n,\alpha r(n)^2)$. 
  \subsection{Probability of goodness and stored files}
 $\Pr[\text{good}|k]$ is the probability that a cluster is good conditioned on $k$ and it depends on what users in the maximum square stores denoted by $\omega$.
 \[\Pr[\text{good}|k]=\sum_{ \big\{\omega\, \big| |\omega|=k\big\}} \Pr[\text{good}|k,\omega]\Pr[\omega]\]
 %where $\omega$ is a random vector of stored files by users in the cluster and $\omega_i$ is a file that user $i$ stores.
Let's define an event $A_i(\omega)$ that user $i$  finds its request either in the cache of its neighbors or its own cache.  \begin{align}\label{good}
 \nonumber \Pr[\text{good}|k, \omega]&\leq \Pr[A_1(\omega)\cup A_2(\omega) \cup \ldots \cup A_k(\omega)]\\
 &=1-\Pr[\bar{A}_1(\omega)\cap \bar{A}_2(\omega) \cap \ldots \cap \bar{A}_k(\omega)]
 \end{align}
 Events $A_i(\omega)$ and $A_j(\omega)$ for $j\neq i$ are dependent since they both depend on $\omega$. 
 The probability that event $A_i(\omega)$ happens is:\\
 \[\Pr[A_i(\omega)]=\sum_{j=1}^m f_j {\bf{1}}_j,\]
 where $f_j$ is the probability that user $i$ requests file $j$. ${\bf{1}}_j$ is an indicator function for file $j$ and it is one if file $j\in \omega$. It is easy to check that $v(\omega)$ in (\ref{value}) is equal to $\Pr[A_i(\omega)]$ for any $i$.
 
  \begin{figure}
\centerline{\includegraphics[width=4.7cm]{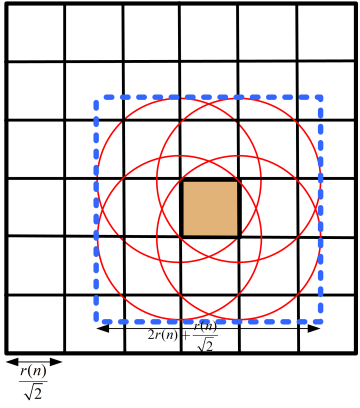}}
\caption{Maximum area covered by all users within a cluster (blue square)}
\label{cover}
\end{figure}

\subsection{Increasing events and FKG inequality}
To find an upper bound for intersection of dependent events
$\bar A_i(\omega)$s in (\ref{good}), we first show that they are decreasing events. Then, we use the FKG inequality for decreasing events \cite{FKG}.
 
  \begin{definition}
 (Increasing event). A random variable X is increasing on $(\Omega,F)$ if $X(\omega) \leq X(\omega^\prime)$ whenever $\omega \leq \omega^\prime$. It is decreasing if $-X$ is increasing. 
  \end{definition}

%We then should define what $\omega\leq \omega^\prime$ means. 
We assume that $\omega\leq \omega^\prime$ if the value of $\omega$ is less than the value of $\omega^{\prime}$, \textit{i.e.}, \\
\[v(\omega)\leq v(\omega^{\prime}).\]
where the value of $\omega$ is defined in (\ref{value}). Thus, according to this definition, event $A_i(\omega)$ for any $1 \leq i\leq k$ is an increasing event. Applying the FKG inequality for correlated and decreasing events $\bar A_i(\omega)$s \cite{FKG}:
\begin{align}\label{FKG}
\Pr[\bar{A}_1(\omega)\cap \bar{A}_2(\omega) \cap \ldots \cap \bar{A}_k(\omega)]\geq  \Pr[(\bar{A}_1(\omega)]^k.
\end{align}
From (\ref{good}) and (\ref{FKG}), we obtain:\\
 \begin{align}
 \Pr[\text{good}|k,\omega]&\leq 1-\Pr[(\bar{A}_1(\omega)]^k \\
& \leq 1-(1-\sum_{j=1}^k f_j)^k. \label{max A}
 \end{align} 
To derive (\ref{max A}), we used the fact that the probability of event $A_1(\omega)$ is maximized if the $k$ most popular files  is in $\omega$. The obtained upper bound in (\ref{max A}) does not depend on $\omega$.  Hence,
\begin{align}\label{good1}
\Pr[\text{good}|k]\leq 1-(1-\sum_{j=1}^k f_j)^k.
\end{align}
In the following, we will consider four non overlapping regions for $r(n)$ and for each region, we will prove the first part of the theorem.

\subsection{First region}
We first consider the region $r(n)=O(\sqrt{\frac{1}{n}})$.
From (\ref{EL}) and (\ref{good1}),
\begin{subequations}
\begin{align}
  \label{EL_up} E[L]&\leq \frac{2}{r(n)^2}\sum_{{k}=0}^{n}{\big[1-(1-\sum_{j=1}^k f_j)^k\big]\Pr[{K}={k}]}\\
  &\leq \frac{2}{r(n)^2}\sum_{{k}=1}^{n}{k\Pr[{K}={k}]} \sum_{j=1}^k f_j.  \label{EL_up_2} 
 \end{align}
 \end{subequations}
 Using part (iii) of lemma \ref{scaling}, the second summation $\sum_{j=1}^k f_j \leq 2 \frac{k^{1-\gamma_r}}{m^{1-\gamma_r}}$. Thus, 
 \begin{subequations}
\begin{align}
\nonumber E[L] &\le \frac{4}{{{r(n)^2}}}\sum\limits_{k = 0}^n {\frac{{{k^{2 - \gamma_r}}}}{{{m^{1 - \gamma_r }}}}\Pr [K = k]} \\
\nonumber  &\le \frac{4}{{{r(n)^2}{m^{1 - \gamma_r }}}}\sum\limits_{k = 0}^n {{k^2}\Pr [K = k]} \\
\nonumber &= \frac{4}{{{r(n)^2}{m^{1 - \gamma_r }}}}E[{K^2}].
\end{align}
\end{subequations}
For the Binomial random variable $K=B(n,\alpha r(n)^2)$,
\[E[{K^2}]=(\alpha nr(n)^2)^2+\alpha nr(n)^2(1-\alpha r(n)^2)\]
 Therefore,
\begin{align}
\nonumber E[L] &\le \frac{4}{{{r(n)^2}{m^{1 - \gamma_r}}}}\left( (\alpha nr(n)^2)^2+\alpha nr(n)^2(1-\alpha r(n)^2)\right)\\
\nonumber  &= \frac{{4n}}{{{m^{1 - \gamma_r }}}}\left( \alpha^2{n{r(n)^2} + \alpha (1 - \alpha{r(n)^2})} \right)\\
\nonumber  &= c \frac{{n}}{{{m^{1 - \gamma_r }}}}.
 \end{align}
 where $c$ is some constant.
%To obtain the last inequality  of the above equation, we use $r(n)\leq \sqrt{\frac{c_{10}}{n}}$. Therefore, we conclude that $E[A]\leq 2\left( {{c_{10}} + 1} \right)\frac{n}{{{m(n)^\eta }}}$ where $\eta=\frac{1-\gamma}{2-\gamma}$.  
%\end{proof}
\subsection{Second region}
Then, we consider the region that $r(n)=\Omega(\sqrt{\frac{1}{n}})$, and $r(n)= O(\sqrt{\frac{\log(m)}{n}})$.
Equation (\ref{EL_up}) implies:
\begin{align}\label{first lemma5}               
 \nonumber E[L] &\leq  \frac{2}{{{r(n)^2}}}\sum\limits_{0 \le k <k_0} {[1-(1-\sum_{j=1}^k f_j)^k]\Pr [K = k]} \\
  &+ \frac{2}{{{r(n)^2}}}\sum\limits_{k \geq k_0} {[1-(1-\sum_{j=1}^k f_j)^k]\Pr [K = k]}.
\end{align}
Assuming that $r(n)\leq \sqrt{\frac{c\log(m)}{n}}$, we choose $k_0= 6\alpha c \log (m)$ where $c$ is some constant. Note $[1-(1-\sum_{j=1}^k f_j)^k]$ is an increasing function of $k$ and it is less and equal to $1$. 
%It can be seen that  $E[a|K=k]$ is an increasing function of $k$. 
%To verify that we should show that for every $k_1$ and $k_2$ such that $k_2>k_1$, $E[a|K=k_2]\ge E[a|K=k_1]$. In (\ref{E[X|K=k]}), $E[a|K=k]$ is expressed as function of $f_i$s, %\textit{i.e.}, popularity of files.
%Considering that $\sum\limits_{i = 1}^{k_1} {f_i}\leq \sum\limits_{i = 1}^{k_2} {f_i} \leq 1$,  we have:\\
%\[{{(1 - {\sum\limits_{i = 1}^{k_1} {{f_i}}})^{k_1}}} \ge {{(1 - {\sum\limits_{i = 1}^{k_2} {{f_i}}})^{k_2}}},\]
%which indicates the $E[a|K=k_2]\ge E[a|K=k_1]$.
%Recall that $E[a|K=k]$ is a probability that a cluster is active given that there are $k$ users in the cluster. Thus, it is less than and equal to $1$.
Therefore, (\ref{first lemma5}) implies,
\begin{align}
\nonumber E[L]  &\le \frac{2}{{{r(n)^2}}} [1-(1-\sum_{j=1}^{k_0} f_j)^{k_0}] \Pr[K<k_0]\\
 &+ \frac{2}{{{r(n)^2}}}\Pr[K\geq k_0],\\
 &\le \frac{2}{{{r(n)^2}}} [{k_0}\sum_{j=1}^{k_0} f_j] \Pr[K<k_0]
 + \frac{2}{{{r(n)^2}}}\Pr[K\geq k_0]\\
 &\le \frac{4}{{{r(n)^2}}} \big[\frac{{k_0}^{2-\gamma_r}}{m^{1-\gamma_r}}\big] \Pr[K<k_0]
 + \frac{2}{{{r(n)^2}}}\Pr[K\geq k_0]\label{writting as 2summation}
\end{align}
We use part (iii) of lemma \ref{scaling} to derive the last equation. For the binomial random variable $K$ and for any $R \geq 6 E[K]$, the  Chernoff bound holds\cite{chernoff1952measure}:\\
\begin{equation}\label{Chernoff_bound1}
\Pr [K \geq R] \le 2^{-R}.
\end{equation}
Applying the Chernoff bound and substituting $k_0$ in (\ref{writting as 2summation}), we acquire:
\begin{align}\label{new_25}
\nonumber E[L] &\le \frac{4}{{{r(n)^2}}}\frac{{(6\alpha c \log (m))^{2 - \gamma }}}{{{m^{1 - \gamma }}}}\\
\nonumber  &+ \frac{2}{{{r(n)^2}}}{2^{- 6\alpha c \log (m)}}\\
 &= {4({6\alpha c})^{2-\gamma}} \frac{1}{{r(n)^2}}\frac{{{{\left( {\log (m)} \right)}^{2 - \gamma }}}}{{{m^{1 - \gamma }}}} + \frac{1}{{{r(n)^2}}}\frac{2}{{{m^{6\alpha c\log 2} }}}.
\end{align}
The function $f(x)=\frac{\log(x)}{x^\beta}$ is always less than $\frac{1}{\beta}$ where $\beta>0$. Thus, $\log(m)\leq \frac{m^{{\eta}^2}}{{\eta}^2}$. %Moreover, $\sqrt{\frac{c_{10}}{n}} < r(n)$. Hence, we have:\\
\begin{align}
\nonumber E[L] &\le {4({6\alpha c})^{2-\gamma}} \frac{1}{{{r(n)^2}}}{\left( {\frac{{{m^{{\eta ^2}}}}}{{{\eta ^2}}}} \right)^{2 - \gamma }}\frac{1}{{{m^{1 - \gamma }}}} + \frac{2n}{{{m^{\eta} }}}\\
\nonumber  &= \frac{4({6\alpha c})^{2-\gamma}}{{\eta ^{4 -2 \gamma }}} \frac{1}{r(n)^2{{m^\eta }}} + \frac{2n}{{{m^\eta }}}\\
 &=\Theta( \frac{n}{m^{\eta}})
\end{align}
%where $c$ is some constant.
%\end{proof}
\subsection{Third and fourth regions}
For the third region, $r(n)= \Omega(\sqrt{\frac{\log(m)}{n}})$ and $r(n)=O(\sqrt{\frac{1}{n}})$. To show the upper bound for $E[L]$ in this region, we follow similar procedure in the second region by setting $k_0=6\alpha nr(n)^2$.
For the last region $r(n)=\Omega(\sqrt{\frac{m^{\eta}}{n}})$, the total number of all virtual clusters $\frac{2}{r(n)^2}=O(\frac{n}{m^{\eta}})$. Thus, for this range of $r(n)$, $E[L]=O(\frac{n}{m^{\eta}})$.
\\
\\

In the following, we will show the \textit{second part} of the theorem. Similar to proof of the theorem $1$, we relate the number of good clusters and active links. We restrict users to communicate with their neighbors in their clusters. We further limit users not to get certain files from their neighbors although some neighbors might store these files. In this case the value of a cluster is the sum of probability of stored files that users can get via their neighbors.  By the restriction on files  the value of cluster becomes concentrated around its mean. We also consider self requests in finding the lower bound.  Applying Chernoff bound and Azuma inequality we show that the probability of goodness is not vanishing when a user is surrounded in average by $\pi nr_{opt}(n)^2$ neighbors from which the result follows.

Define $\eta_1\triangleq \eta+\epsilon=\frac{(1-\gamma_r)\gamma_c}{1-\gamma_r+\gamma_c}$. We should show that if we choose $r(n)=\Theta(\sqrt{\frac{m^{\eta_1}}{n}})$, the probability that a virtual cluster is good does not vanish as $n$ grows.

When $r(n)=\Theta(\sqrt{\frac{m^{\eta_1}}{n}})$, there are $\Theta(\frac{n}{m^{\eta_1}})$ virtual clusters.  The number of active D2D links is upper bounded by the number of virtual clusters. Thus, $E[L]=O(\frac{n}{m^{\eta_1}})$.
Then, we show that for $c_3 \sqrt{\frac{m^{\eta_1}}{n}}\leq r(n)\leq c_4\sqrt{\frac{m^{\eta_1}}{n}}$, $E[L]=\Omega(\frac{n}{m^{\eta_1}})$. % In other words, we should only prove the lower bound for $E[L]$. 
To do this, we follow similar procedure in theorem \ref{theorem_1}. We divide the cell into virtual clusters and we allow each user to look for its desired file just within its cluster. As mentioned before, each cluster can block at most $16$ other clusters (Figure \ref{interfer}).

\subsection{Limiting users and excluding self request}
To find the lower bound, we even more restrict users. We assume that users can not get  files $\{1,2,\ldots, q-1\}$  locally even if there are users in the cluster that cache these files where $q=m^{\frac{\eta_1}{\gamma_c}}$. So, caching files  $\{1,2,\ldots, q-1\}$ doesn't have any value for any user in the cluster. 

 $E[L]$ is lower bounded by expression in (\ref{blocking}) where the lower bound for $E[G]$ is given in (\ref{EG111}). Similar to (\ref{pr[good]111}), we exclude the self requests . Thus, the probability that a cluster is good conditioned on $k$ and $\omega$ is
 \begin{align}\label{pr[good]}
\Pr[\text{good}|k,\omega]
&\ge 1-\left(1-\left(v(\omega)-\max_{i\in\{q,\ldots,m\}} f_{\omega_i}\right)\right)^k
\end{align}
 where $v(\omega)=\sum_{j=q}^m f_j{\bf 1}_{j}$ and ${\bf 1}_{j}$ is an indicator function. ${\bf 1}_{j}$ is one if at least one user in the virtual cluster stores file $j$. We limit ourselves to all cases in which at least one user caches file $q$. Hence,
\begin{align}\label{good3}
 \Pr[\text{good}|k,\omega]&\ge 1-\left(1-\left(v(\omega)-f_{q}\right)\right)^k
\end{align}
\subsection{Chernoff bound}
As in proof of theorem $1$, we first limit the interval of $k$ and the we use the Chernoff bound.
 
 By restricting $k$ to an interval around its average, \textit{i.e.}, $I=[nr(n)^2(1-\delta)/2, nr(n)^2(1+\delta)/2]$ where $0<\delta<1$ and applying (\ref{good3}) in (\ref{EG111}), the following lower-bound is obtained:
 \begin{align}
 \nonumber E[G]&\ge \frac{2}{r(n)^2}\sum_{k \in I}Pr[K=k]\\
 & \times \sum_{ \omega\in {\bf{x}}} \big[1-(1-(v(\omega)-f_{q}))^k]\Pr[\omega],
 \end{align}
 where ${\bf{x}}={ \big\{\omega\, \big| |\omega|=k\, \, \text{and} \,\,q\in \omega\big \}}$.
 %where $\bf{p^{*}}$ is the all sets $\bf{p}$ in which at least one user stores file $q$.
  Let $k^{*}$ be %the number of users in the interval $[nr(n)^2(1-\delta), nr^2(1+\delta)]$ such that:
 \begin{align}
 \nonumber k^{*}&\triangleq \arg\min_{k \in I }\sum_{\omega \in{\bf{x}} } \big[1-(1-(v(\omega)-f_{q}))^k]\Pr[\omega]
 \end{align}
Notice that $k^{*}$ and also all $k\in I $ are $\Theta(m^{\eta_1})$. Then,
\begin{align}
\nonumber E[G]&\ge \frac{2}{r(n)^2}\Pr[K \in I]\\
\nonumber& \times   \sum_{ \omega  \in {\bf{x}}} 1-(1-(v(\omega)-f_{q}))^{k^{*}}\Pr[\omega]\\
 &\nonumber \geq  \frac{2}{r(n)^2} \left(1-2\exp\left({-nr(n)^2\delta^2/6}\right)\right)\\
 &\times \sum_{ \omega \in {\bf{x}}} \big[1-(1-(v(\omega)-f_{q}))^{k^{*}}\big]\Pr[\omega].\label{use_chernoff}
\end{align}
We use the Chernoff bound in (\ref{use_chernoff}).
Let $A_{q,h}^{k}$ denote the event that $1\leq h \le k$ users cache file $q$. Then, we can rewrite the above lower-bound as follows:
\begin{align}
 \nonumber E&[G]\geq  \frac{2}{r(n)^2} \left(1-2\exp\left({-nr(n)^2\delta^2/6}\right)\right)\\
 \nonumber &\times \sum_{h=1}^{k^{*}}E_v[1-(1-(v-f_{q}))^{k^{*}}|A_{q,h}^{k^*}] \Pr[A_{q,h}^{k^*}]\\
 \nonumber & \geq  \frac{2}{r(n)^2} \left(1-2\exp\left({-nr(n)^2\delta^2/6}\right)\right)\\
  &\times \sum_{h=k^{*}p_q(1-\delta_1)}^{k^{*}p_q(1+\delta_1)}E_v[1-(1-(v-f_{q}))^{k^{*}}|A_{q,h}^{k^*}] \Pr[A_{q,h}^{k^*}]
\end{align}
where $\Pr[A_{q,h}^{k^*}]= \left( \begin{array}{c}
k^{*}  \\
h \end{array} \right) ({p_q})^{h}\left(1-{p_q}\right)^{k^{*}-h}$, $k^*p_q$ is the average of binomial random variable $h$ and  $0<\delta_1<1$. 

Define $h^*$ as
\begin{align}
\nonumber h^*&\triangleq \arg\min_{k^{*}p_q(1-\delta_1) \leq h\leq k^{*}p_q(1+\delta_1)} \\&E_v[1-(1-(v-f_{q}))^{k^{*}}|A_{q,h}^k] 
\end{align}
 Using Chernoff bound for binomial random variable $h$, we obtain:
\begin{align}\label{EG in achievebility}
E[&G] \geq  \frac{2}{r(n)^2} \left(1-2\exp\left({-nr(n)^2\delta^2/6}\right)\right) \\
&\times\left(1-2\exp\left({k^{*}p_q\delta_1^2/3}\right)\right)
 \nonumber E_v[1-(1-(v-f_{q}))^{k^{*}}|A_{q,h^*}^{k^*}]
\end{align}
The probability that a user caches file $q$ is:
\begin{align}
\nonumber p_q&=\frac{\frac{1}{q^{\gamma_c}}}{\sum_{j=1}^{m}\frac{1}{j^{\gamma_c}}}\\
&=\frac{\Theta(\frac{1}{m^{\eta_1}})}{H(\gamma_c,1,m)} \label{p_q}
\end{align}
where function $H$ is defined in lemma \ref{scaling}. We show in lemma \ref{scaling} that $H(\gamma_c,1,m)=\Theta(1)$ given that $\gamma_c >1$. Thus, all $h\in[k^{*}p_q(1-\delta_1),k^{*}p_q(1+\delta_1)]$ are $\Theta(1)$. By selecting the constant  $c_3$ large enough, the second exponential term $\left(1-2\exp\left({-k^{*}p_q\delta_1^2/3}\right)\right)$ will be greater than zero.
\subsection{Probability of goodness is not vanishing }
To complete the proof, it is enough to show that the probability that cluster is good, \textit{i.e.}, $E_v[1-(1-(v-f_{q}))^{k^{*}}|A_{q,h^*}^{k^*}]$ given in (\ref{EG in achievebility})
  does not vanish.
\begin{align}\label{fv}
\nonumber &E_v[1-(1-(v-f_{q}))^{k^{*}}|A_{q,h^*}^{k^*}] \geq \\
 &
\int_{|v-E_v[v|A_{q,h^{*}}^{k^*}]|<t}(1-(1-(v-f_{q}))^{k^{*}})f_{v|A_{q,h^{*}}^{k^*}}(v) dv
\end{align}
where $f_{v|A_{q,h^{*}}^{k^*}}(v)$ is a probability distribution function of value $v$ conditioned on $A_{q,h^{*}}^{k^*}$ and $0<t<E_v[v|A_{q,h^{*}}^{k^*}]$. The average of $v$ conditioned on $A_{q,h^{*}}^{k^*}$ is given by:
\begin{align}\label{E_v[v|A_{q,h}^{k}]}
E_v[v|A_{q,h^{*}}^{k^*}]=f_q+\sum_{j=q+1}^{m} f_j (1-(1-p_j)^{k^{*}-h^{*}})
\end{align}
 %We can restrict $v$ around its average. 
From equation  (\ref{fv}) and since $(1-(1-(v-f_{q}))^{k^{*}})$ is an increasing function of $v$,
\begin{subequations}
\begin{align}
\nonumber E_v[1-&(1-(v-f_{q}))^{k^{*}}|A_{q,h^*}^{k^*}]  \\
\nonumber &\geq \left(1-\left(1-(\left(E_v[v|A_{q,h^{*}}]-t\right)-f_{q})\right)^{k^{*}}\right)\\
\times & \Pr \big [{|v-E_v[v|A_{q,h^*}^{k^*}]|<t}\big]\\
\nonumber &\geq 1-\exp(-{k^{*}}(E_v[v|A_{q,h^*}^{k^*}]-t-f_{q}))\\
 \times & \Pr \big[{|v-E[v|A_{q,h^*}^{k^*}]|<t}\big] \label{concentration}
\end{align}
\end{subequations}
We show in lemma \ref{E[v]scaling}, $E_v[v|A_{q,h^{*}}^{k^*}]=\Theta(\frac{1}{m^{\eta_1}})$. Thus, $k^*E_v[v|A_{q,h^*}^{k^*}]=\Theta(1)$. Furthermore, (\ref {fv}) implies %$t$ should be less than $E_v[v|A_{q,h^{*}}^{k^*}]$, thus, 
\[t=O(E_v[v|A_{q,h^*}^{k^*}])=O(\frac{1}{m^{\eta_1}}).\]
 Similar to (\ref{p_q}), we can show that $f_q=\Theta(\frac{1}{m^{\eta_2}})=O(E_v[v|A_{q,h^*}^{k^*}])$ where  $\eta_2=\frac{(1-\gamma_r)(1+\gamma_c)}{1-\gamma_r+\gamma_c}$. Thus, the exponent in the first term of (\ref{concentration}) is $\Theta(1)$. %By selecting constant $c_3$ large enough, the first term in (\ref{concentration}) can be a positive constant. %(similar to \ref{theorem_1}, we can find a lower bound for constant $c_5$).
To prove the result, it is enough to show the second term in (\ref{concentration}) does not approach zero as $n$ grows. By applying the Azuma Hoeffding inequality in lemma {\ref{azuma}},
\begin{align}\label{azuma1}
\Pr\big[{|v-E_v[v|A_{q,h^*}^{k^*}]|\leq t}\big]\geq  1-2\exp\big(-\frac{2t^2}{(k^*-h^*)(f_q)^2}\big) 
\end{align}
Due to the fact that $k^*=\Theta(m^{\eta_1})$ and $h^{*}=\Theta(1)$, the term $k^*-h^*=\Theta(m^{\eta_1})$. %Since $E_v[v|A_{q,h^{*}}^{k^*}]=\Theta(\frac{1}{m^{\eta_1}})$ and we have to choose  $t< E_v[v|A_{q,h^{*}}^{k^*}]$,  $t$ should be $O(\frac{1}{m^{\eta_1}})$. 
If we select $t=\Theta(\frac{1}{m^{\eta_1}})$, we can observe that the exponent $\frac{2t^2}{(k^*-h^*)(f_q)^2}$ scales with $m^{\eta_2(2-\gamma_c)}$. Hence, if $\gamma_c<2$, the exponent goes to infinity as $n$ grows. $\gamma_c<2$ implies that $\epsilon<\frac{1}{6}$.
This means that $v$ is concentrated around its average with high probability if  $\epsilon \le \frac{1}{6}$ and as a result, the second term in (\ref{concentration}) is positive constant when $n$ goes to infinity.

\section{Proof of Theorem 3}\label{proof3}

The proof of the first part of the theorem is similar to the proof of the theorem 2. $E[L]$ is upper bounded by the expression in (\ref{EL_up}). Next, we consider three non-overlapping regions for $r(n)$ and we show the upper bound is valid for every $r(n)$.
\subsection{First region}
First, we assume $r(n)=O(\sqrt{\frac{\log\log(m)}{n}})$.  Eqation (\ref{EL_up_2}) and part (v) of lemma \ref{scaling} imply, 
%Using lemma \ref{}, we have:\\
%\[\sum_{j=1}^k f_j\leq  \frac{\log(k)+1}{\log(m)} \leq \frac{k}{\log({m})}.
%\]
%If we apply the above inequality in (\ref{EL_up_2}),
%\frac{\log(k)+1}{\log(m)}
\begin{subequations} 
\begin{align}
E[L]& \leq \frac{2}{r(n)^2}\sum_{{k}=0}^{n}\frac{\log(k)+1}{\log(m)}\Pr[{K}={k}]\\
&\leq \frac{2}{r(n)^2\log(m)}\sum_{{k}=0}^{n}k^2\Pr[{K}={k}]\\
 & = \frac{2}{r(n)^2\log(m)} E[K^2]\\
\label{Ek2}&\leq \frac{2}{r(n)^2\log(m)} \big[(\alpha nr(n)^2)^2+\alpha nr(n)^2\big]\\
&=\frac{2n}{\log (m)} \big[{\alpha^2 n{r(n)^2} + \alpha}\big]\\
\label{upp1}&\leq \frac{2cn}{{\log (m)}}(\alpha^2 \log\log(m) + \alpha)\\
&=\Theta(\frac{n\log\log(m)}{\log(m)})
\end{align}
\end{subequations}
%In (\ref{Ek2}), we$E[K^2]$ where $K=B(n, \alpha r(n)^2)$ . 
To derive (\ref{upp1}), we use the range of $r(n)$.

\subsection{Second  and third regions}
Let's consider the second region for $r(n)$. In this region $r(n)=\Omega(\sqrt{\frac{\log\log(m)}{n}})$ and $r(n)=O(\sqrt{\frac{\log(m)}{n}})$. From (\ref{EL_up}),
\begin{align}
\nonumber E[L]&\leq \frac{2}{r(n)^2}\sum_{{k}=0}^{6\alpha nr(n)^2}[{1-(1-\sum_{j=1}^k f_j)^k]\Pr[{K}={k}]}\\
&+ \frac{2}{r(n)^2}\sum_{{k}=6\alpha nr(n)^2}^{n}[{1-(1-\sum_{j=1}^k f_j)^k]\Pr[{K}={k}]}
\end{align}
where $\alpha$ is defined in theorem $2$. The term $[1-(1-\sum_{j=1}^k f_j)^k]$ is an increasing function of $k$, thus,
\begin{subequations}
\begin{align}
\nonumber E[L]&\leq \frac{2}{r(n)^2}\left[{1-\left(1-\sum_{j=1}^{6\alpha nr(n)^2} f_j\right)^{6\alpha nr(n)^2}}\right]\\
&+ \frac{2}{r(n)^2}\Pr[K> 6\alpha nr(n)^2]\\
& \leq  \frac{2}{r(n)^2}  6\alpha nr(n)^2 \sum_{j=1}^{6\alpha nr(n)^2} f_j+  \frac{2}{r(n)^2} {2^{ - 6\alpha n{r(n)^2}}} \label{chernof}
\end{align}
\end{subequations}
In (\ref{chernof}), we applied the Chernoff bound \cite{chernoff1952measure}. From lemma \ref{scaling} and the range of $r(n)$, we obtain
\begin {align} 
\nonumber E[L]& \le 12 \alpha n\frac{{\log (6\alpha n{r(n)^2}) + 1}}{{\log (m)}} + \frac{2}{r(n)^2} {2^{ - 6\alpha n{r(n)^2}}}\\
\nonumber &\leq 12 \alpha n\frac{{\log (6\alpha c_7 \log(m)) + 1}}{{\log (m)}}\\
\nonumber &+ \frac{2n}{c_8 \log\log(m)} {2^{ - 6\alpha c_8 \log\log(m)}}\\
\nonumber &=\Theta(\frac{n\log\log(m)}{\log(m)})+ \frac{2n}{{c_8\log\log (m)}}\times \frac{1}{ {\log(m)}^{6\alpha c_8 \log(2)}}\\
\nonumber &=\Theta(\frac{n\log\log(m)}{\log(m)}).
\end{align}

For the last region, \textit{i.e.}, $r(n)=\Omega(\sqrt{\frac{\log(m)}{n}})$, the total number of virtual clusters is $O(\frac{n}{\log(m)})$  and as a result, $E[L]=O(\frac{n}{\log(m)})=O(\frac{n\log\log(m)}{\log(m)})$.
%where $c$ is some constant. 
\\
\\

In the following, we will show the \textit{second part} of the theorem. We propose a centralized algorithm that can match the upper bound. The BS divides the cell into virtual cluster of size $r(n)=\Theta(\sqrt{\frac{\log(m)}{n\log\log(m)}})$. Given that there are $k$ users in a cluster, each of them should cache one of the $k$ most popular files.  We show that under this caching policy, we can match the upper bound.
%\subsection{Clustering and Chernoff bound}
To find the lower bound, we assume that users can just find their desired files just within clusters they belong to. The lower bound for $E[L]$ and  $E[G]$ are respectively given in (\ref{blocking}) and (\ref{EG11}). Limiting the range of $k$ results in
\begin{align} \label{EG22} 
  E[G]
 &\geq \frac{2}{r(n)^2}\sum_{k\in I}{\Pr[\text{good}|k]\Pr[K=k]},
 \end{align}
 where $I=[nr(n)^2(1-\delta)/2,nr(n)^2(1-\delta)/2]$. Under this centralized caching policy the value of stored files within a cluster with $k$ users is $v(k)=\sum_{j=1}^k f_j$. The cluster is good if at least one user within a cluster requests one of the $k$ most popular files not stored in its own cache\\
 \begin{align}
 \nonumber \Pr[\text{good}|k] &\geq 1 - \left(1-\left(v(\omega)-f_1\right)\right)^k\\
 \nonumber &=1 - \left(1-\sum_{j=2}^k f_j\right)^k\\
\nonumber  & \ge 1-\exp\left(-k \sum_{j=2}^k f_j\right)\\
 & \geq 1 - {\exp{ \left(-\frac{{k\left(\log (k)-1\right)}}{{\log (m) + 1}}\right)}} \label{exp}
 \end{align}
 In the last equation we used lemma \ref{scaling}. The expression in (\ref{exp}) is an increasing function of $k$. Thus, (\ref{exp}) and (\ref{EG22}) imply
 \begin{align}
 E[G]&\geq \left(1 - {\exp{ \left(-\frac{{k_{min}\left(\log (k_{min})-1\right)}}{{\log (m) + 1}}\right)}}\right) \Pr[K \in I]\\
\nonumber  &\geq\Big(1 - {\exp{ \big(-c\frac{{k_{min}\log (k_{min})}}{{\log (m)}}\big)}}\Big)\\
 &\times \left(1-2\exp\left({-nr(n)^2\frac{\delta^2}{6}}\right)\right) \label{EG33}
 \end{align}
 where $k_{min}=nr(n)^2(1-\delta)/2=\Theta(\frac{\log(m)}{\log\log(m)})$. We use the Chernoff bound to derive (\ref{EG33}).
  As $n$ grows, the second term in (\ref{EG33}) goes to $1$. It can be seen that the first term in (\ref{EG33}) is also $\Theta(1)$. Thus, $E[G]$ and consequently $E[L]$ are $\Theta(\frac{n\log\log(m)}{\log m})$.

\section{Some preliminary lemmas}\label{Some preliminary lemmas}
\begin{lemma} \label{scaling}
\begin{itemize}
\item[i)] If $\gamma>1$ and $a=o(b)$,
$H( \gamma,a,b)=\Theta(\frac{1}{a^{\gamma-1}})$.
\item[ii)] If $ \gamma<1$, $a=o(b)$, and $a=\Theta(1)$,
$H(\gamma,a,b)=\Theta({b^{1-\gamma}})$.
\item[iii)] if $\gamma_r<1$, $\sum_{j=1}^k f_j \leq 2 \frac{k^{1-\gamma_r}}{m^{1-\gamma_r}}$.
\item[iv)] If $\gamma_c, \gamma_r>1$, $\sum_{i=2}^{m}f_i p_i=\Theta(1)$.
\item[v)] if $\gamma=1$, $\sum_{j=l}^{k}f_j\leq \frac{{\log (k) + 1}}{{\log (m)}}$ and $\sum_{j=2}^{k}f_j \geq \frac{{\log (k) - 1}}{{\log (m) + 1}} $.
%\item[iv)] If $\gamma_c, \gamma_r\leq 1-\epsilon$ and $\gamma_c+\gamma_r \geq 1+\epsilon$,
%\[\sum_{j=2}^{m}f_j p_j=\Theta(\frac{1}{m^{2-\gamma_r-\gamma_c}})\]
\end{itemize}
where $H(\gamma,a,b)=\sum\limits_{j=a}^{b}\frac{1}{{{i^\gamma }}}$,
\begin{equation}
 p_i=\frac{{\frac{1}{{{i^{\gamma_c} }}}}}{{\sum\limits_{j = 1}^m {\frac{1}{{{j^{\gamma_c} }}}} }},\,\,\ 1\leq i\leq m.
 \end{equation}
 and $f_i$ is defined in (\ref{zipf}).

\end{lemma}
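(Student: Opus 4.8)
The unifying tool for every part of this lemma is the comparison of the sum $H(\gamma,a,b)=\sum_{j=a}^{b}j^{-\gamma}$ with the corresponding integral, since $x\mapsto x^{-\gamma}$ is positive and monotonically decreasing for $\gamma>0$. The plan is to first establish the two asymptotic estimates in parts (i) and (ii) directly from the elementary sandwich
\[\int_a^{b+1} x^{-\gamma}\,dx \;\le\; H(\gamma,a,b) \;\le\; a^{-\gamma}+\int_a^{b} x^{-\gamma}\,dx,\]
and then obtain parts (iii)--(v) as corollaries, by writing each Zipf quantity as a ratio of two such sums and invoking (i)/(ii).

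For part (i), with $\gamma>1$ the antiderivative gives $\int_a^b x^{-\gamma}\,dx=\frac{a^{1-\gamma}-b^{1-\gamma}}{\gamma-1}$. The lower bound becomes $\frac{a^{1-\gamma}-(b+1)^{1-\gamma}}{\gamma-1}$, and the hypothesis $a=o(b)$ forces $(b+1)^{1-\gamma}/a^{1-\gamma}=(a/(b+1))^{\gamma-1}\to 0$, so the upper endpoint is negligible and the bound is $\Theta(a^{1-\gamma})$. The upper bound is $a^{-\gamma}+\frac{a^{1-\gamma}}{\gamma-1}\le a^{1-\gamma}\bigl(1+\tfrac{1}{\gamma-1}\bigr)$, again $\Theta(a^{1-\gamma})=\Theta(a^{-(\gamma-1)})$. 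For part (ii) the same integral with $\gamma<1$ equals $\frac{b^{1-\gamma}-a^{1-\gamma}}{1-\gamma}$; now $a=\Theta(1)$ makes $a^{1-\gamma}=\Theta(1)$ a lower-order term while $b^{1-\gamma}\to\infty$, so both sandwich bounds are $\Theta(b^{1-\gamma})$.

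Parts (iii)--(v) then reduce to algebra on ratios. Writing $f_i=i^{-\gamma_r}/H(\gamma_r,1,m)$ and $p_i=i^{-\gamma_c}/H(\gamma_c,1,m)$, I would note $\sum_{j=1}^k f_j=H(\gamma_r,1,k)/H(\gamma_r,1,m)$; applying part (ii) to numerator and denominator immediately yields the $\Theta\bigl((k/m)^{1-\gamma_r}\bigr)$ behaviour of (iii), and the explicit constant $2$ I would recover by bounding the numerator above via $\sum_{j=1}^k j^{-\gamma_r}\le k^{1-\gamma_r}/(1-\gamma_r)$ and the denominator below via $\sum_{j=1}^m j^{-\gamma_r}\ge \frac{m^{1-\gamma_r}-1}{1-\gamma_r}\ge \frac{m^{1-\gamma_r}}{2(1-\gamma_r)}$ for $m$ large. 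For (iv), part (i) gives $H(\gamma_r,1,m),H(\gamma_c,1,m)=\Theta(1)$, so $f_ip_i=\Theta(i^{-(\gamma_r+\gamma_c)})$ and $\sum_{i\ge 2} i^{-(\gamma_r+\gamma_c)}$ is trapped between its convergent infinite tail (a constant, since $\gamma_r+\gamma_c>2$) and its first term, hence $\Theta(1)$. For (v) the normaliser is the harmonic sum, which I would bracket by $\log m\le H(1,1,m)\le \log m+1$; combining this with $\sum_{j=l}^k j^{-1}\le 1+\log k$ and $\sum_{j=2}^k j^{-1}\ge \log k-\log 2\ge \log k-1$ gives the two stated inequalities.

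The computations are all routine; the only place demanding genuine care is the explicit-constant bookkeeping in (iii) and (v), where $\Theta$-asymptotics are not enough and I must keep the additive $\pm 1$ and multiplicative $2$ factors coming out of the integral tests sharp enough to land exactly on the claimed bounds. The one conceptual point worth stating explicitly is the correct use of the one-sided hypothesis $a=o(b)$ in (i) to discard the $b$-endpoint rather than the $a$-endpoint, which is what makes the leading order depend on $a$ alone.
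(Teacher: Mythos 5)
Your proposal is correct and follows essentially the same route as the paper: the integral-comparison sandwich for $H(\gamma,a,b)$ yields parts (i) and (ii), and parts (iii)--(v) follow by writing each Zipf quantity as a ratio of such sums (the paper omits the details of (v) as ``similar,'' which you supply explicitly, and your constant-tracking in (iii) and (v) is consistent with the paper's). No gaps.
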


\begin{proof}
We first prove the parts (i) and (ii) of the lemma.

  $\frac{1}{x^{\gamma}}$ is monotonically decreasing. Thus,\\
    \begin{equation}\label{first_ineq}
    H(\gamma,a,b)\ge \int\limits_{x=a}^{b} {\frac{1}{x^{\gamma}}} =\frac{{b^{(-\gamma+1)}-a^{-\gamma+1}}}{-\gamma+1}
    \end{equation}
     %The above inequality is illustrated in Fig. \ref{approx}. The area below the curve $f(x)=\frac{1}{x^\gamma}$ corresponds to the integral and the shaded area corresponds to $\sum\limits_{j=1}^{m(n)}{\frac{1}{j^{\gamma}}}$.
      We also have the following inequality:
      \begin{align}
    \nonumber H(\gamma,a,b)-\frac{1}{a^{\gamma}}&=\sum\limits_{j=a+1}^{b}{\frac{1}{j^{\gamma}}}\\
    &\leq \int\limits_{x=a}^{b}{\frac{1}{x^{\gamma}}}=\frac{{b^{(-\gamma+1)}-a^{(-\gamma+1)}}}{-\gamma+1} \label{second_ineq}
      \end{align}
  Thus, $H(\gamma,a,b)$ satisfies:
   \begin{equation}\label{bounds}
        \frac{{b^{(-\gamma+1)}-a^{-\gamma+1}}}{-\gamma+1} \leq H(\gamma,a,b) \leq \frac{{b^{(-\gamma+1)}-a^{-\gamma+1}}}{-\gamma+1}+\frac{1}{a^{\gamma}}
      \end{equation} 
      Therefore, if $\gamma>1$, $H(\gamma,a,b)=\Theta(\frac{1}{a^{\gamma-1}})$. Besides, if $\gamma<1$ and $a=\Theta(1)$, then $H(\gamma,a,b)=\Theta({b^{1-\gamma}})$. 
     
      For part (iii), using (\ref{first_ineq}) and  (\ref{second_ineq}), we have
      \begin{align}
 \nonumber \sum_{j=1}^k f_j&=\frac{H(\gamma_r,1,k)}{H\gamma_r,1,m} \\
 \le & \frac{{{k^{(1 - \gamma_r )}} - \gamma_r }}{{m{{}^{(1 - \gamma_r )}} - 1}} \nonumber\\
  \nonumber \le & 2\frac{{{k^{(1 - \gamma_r )}}}}{{m{{}^{(1 - \gamma_r )}}}}.
     \end{align}
      
      Next we show part (iv). From (\ref{zipf}), we have:
 \begin{align}
 \nonumber \sum_{j=2}^{m}f_j p_j = & \frac{\sum_{j=2}^m\frac{1}{j^{\gamma_r+\gamma_c}}}{\sum_{j=1}^m \frac{1}{j^{\gamma_r}}\sum_{j=1}^{m} \frac{1}{\gamma_c}}\\
  =&\frac{ H( \gamma_c+\gamma_r,2,m)}{H(\gamma_c,1,m)H(\gamma_r,1,m)}   \end{align}
%Given that $\gamma_c, \gamma_r<1$ and $\gamma_c+\gamma_r>1$, 
 %the nominator of the above equation is $\Theta(1)$ and the dominator is $\Theta(m^{1-\gamma_r})\Theta(m^{1-\gamma_c})=\Theta(m^{2-\gamma_r-\gamma_c})$ (see part i and ii of the lemma). 
  When $\gamma_c, \gamma_r>1$, both the nominator and the dominator of $ \sum_{j=2}^{m}f_j p_j$ are $\Theta(1)$, from which (iv) follows.
 
Since the proof of the part (v)
 is similar to parts (i) and (ii), we omit it. % The lemma is summarized in table \ref{table}.
  
%When $\gamma_c, \gamma_r>1+\epsilon$, $ \sum_{j=2}^{m}f_j p_j $ will be $\Theta(1)$. Moreover, given that $\gamma_c, \gamma_r<1-\epsilon$  and $\gamma_c+\gamma_r>1+\epsilon$,  $ \sum_{j=2}^{m}f_j p_j =\Theta(\frac{1}{m^{2-\gamma_c-\gamma_r}})$.
 
\end{proof}

\begin{lemma} \label{E[v]scaling}
If $\gamma_c>1$, $\gamma_r<1$, $k=\Theta(m^{\eta_1})$, and $h=\Theta(1)$
 \begin{align}
%\nonumber  E_v[v|k]=\Theta(\frac{1}{m^{\eta_1}})\\
\nonumber  E_v[v|A_{q,h}^{k}]=\Theta(\frac{1}{m^{\eta_1}})
 \end{align}
 where $\eta_1=\frac{\gamma_c(1-\gamma_r)}{1-\gamma_r+\gamma_c}$, $q=m^{\frac{\eta_1}{\gamma_c}}$, and $E_v[v|A_{q,h}^{k}]$ is defined in (\ref{E_v[v|A_{q,h}^{k}]}).
\end{lemma}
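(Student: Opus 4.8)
The plan is to compute $E_v[v|A_{q,h}^{k}]$ explicitly using its closed form in (\ref{E_v[v|A_{q,h}^{k}]}) and then bound the resulting sum above and below by quantities of order $\frac{1}{m^{\eta_1}}$. Recall from (\ref{E_v[v|A_{q,h}^{k}]}) that
\[
E_v[v|A_{q,h}^{k}]=f_q+\sum_{j=q+1}^{m} f_j\big(1-(1-p_j)^{k-h}\big),
\]
so the task splits into controlling $f_q$ and the tail sum. First I would handle $f_q$: from the Zipf definition (\ref{zipf}) and part (i) of lemma \ref{scaling} (since $\gamma_r<1$, I would actually use the normalization $H(\gamma_r,1,m)=\Theta(m^{1-\gamma_r})$ from part (ii)), together with $q=m^{\eta_1/\gamma_c}$, one gets $f_q=\Theta\big(q^{-\gamma_r}/m^{1-\gamma_r}\big)=\Theta\big(m^{-\gamma_r\eta_1/\gamma_c}/m^{1-\gamma_r}\big)$. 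A direct exponent check, using $\eta_1=\frac{\gamma_c(1-\gamma_r)}{1-\gamma_r+\gamma_c}$, should confirm this exponent equals $\eta_1$, so that $f_q=\Theta(\frac{1}{m^{\eta_1}})$; this already supplies the correct lower bound since every term in the sum is nonnegative, giving $E_v[v|A_{q,h}^{k}]\geq f_q=\Omega(\frac{1}{m^{\eta_1}})$.

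The substantive work is the matching upper bound on the tail sum $\sum_{j=q+1}^{m} f_j\big(1-(1-p_j)^{k-h}\big)$. The key idea is to split the summation range at the point where $(k-h)p_j$ transitions from large to small, since $1-(1-p_j)^{k-h}\approx (k-h)p_j$ when $(k-h)p_j$ is small and is bounded by $1$ always. For the regime where $p_j$ is small I would use $1-(1-p_j)^{k-h}\leq (k-h)p_j$, reducing the problem to estimating $\sum_j f_j p_j$ (a product of two Zipf distributions, which factors into ratios of $H$-functions as in the proof of part (iv) of lemma \ref{scaling}), scaled by $k-h=\Theta(m^{\eta_1})$. For the regime where $p_j$ is large I would bound $1-(1-p_j)^{k-h}\leq 1$ and estimate $\sum f_j$ over that range using the integral bounds (\ref{first_ineq})–(\ref{second_ineq}). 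Both pieces should, after substituting the definitions of $\eta_1$ and $q$ and simplifying the exponents, be $O(\frac{1}{m^{\eta_1}})$; the exponent $\eta_1$ is precisely the value that balances these two contributions, which is why it appears in the statement.

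The main obstacle will be the exponent bookkeeping: one must verify that the crossover scale for $j$ (where $(k-h)p_j\sim 1$) is consistent with $q=m^{\eta_1/\gamma_c}$, and that both the "small $p_j$" and "large $p_j$" contributions land at the same order $\frac{1}{m^{\eta_1}}$ rather than one dominating with a larger exponent. This requires carefully tracking the normalization constants $H(\gamma_c,1,m)=\Theta(1)$ (from lemma \ref{scaling}(i), valid since $\gamma_c>1$) and $H(\gamma_r,1,m)=\Theta(m^{1-\gamma_r})$, and repeatedly invoking the defining relation $\frac{\gamma_c(1-\gamma_r)}{1-\gamma_r+\gamma_c}=\eta_1$, equivalently $\eta_1(1-\gamma_r+\gamma_c)=\gamma_c(1-\gamma_r)$, to collapse the accumulated exponents. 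Once the two regimes are each shown to be $O(\frac{1}{m^{\eta_1}})$ and combined with the lower bound $f_q=\Omega(\frac{1}{m^{\eta_1}})$, the claim $E_v[v|A_{q,h}^{k}]=\Theta(\frac{1}{m^{\eta_1}})$ follows.
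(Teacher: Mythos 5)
Your upper-bound strategy is fine and essentially the paper's: since $k-h=\Theta(m^{\eta_1})$ and $p_j=\Theta(j^{-\gamma_c})$ (because $H(\gamma_c,1,m)=\Theta(1)$ for $\gamma_c>1$), the crossover where $(k-h)p_j\sim 1$ sits exactly at $j\sim m^{\eta_1/\gamma_c}=q$, so on the entire range $j\geq q$ the linearization $1-(1-p_j)^{k-h}\leq (k-h)p_j$ already applies and gives $O\big(k\sum_{j\geq q}f_jp_j\big)=O\big(k\,H(\gamma_c+\gamma_r,q,m)/(H(\gamma_r,1,m)H(\gamma_c,1,m))\big)=O(m^{-\eta_1})$; your two-regime split collapses to this single bound.

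The lower bound, however, contains a genuine error. You assert $f_q=\Theta(m^{-\eta_1})$ and conclude $E_v[v|A_{q,h}^{k}]\geq f_q=\Omega(m^{-\eta_1})$, but the exponent check you defer does not go through. With $H(\gamma_r,1,m)=\Theta(m^{1-\gamma_r})$ and $q=m^{\eta_1/\gamma_c}$,
\[
f_q=\Theta\Big(\frac{q^{-\gamma_r}}{m^{1-\gamma_r}}\Big)=\Theta\big(m^{-(\gamma_r\eta_1/\gamma_c\,+\,1-\gamma_r)}\big),
\qquad
\frac{\gamma_r\eta_1}{\gamma_c}+1-\gamma_r=\frac{(1-\gamma_r)(1+\gamma_c)}{1-\gamma_r+\gamma_c}=\eta_2>\eta_1,
\]
so $f_q=\Theta(m^{-\eta_2})=o(m^{-\eta_1})$ (e.g.\ $\gamma_r=1/2$, $\gamma_c=2$ gives $\eta_1=0.4$ but $f_q=\Theta(m^{-0.6})$); this matches the paper's own remark that $f_q=\Theta(m^{-\eta_2})$ in the proof of Theorem~\ref{theorem2}. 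Hence $f_q$ alone cannot supply the lower bound, and the $\Omega(m^{-\eta_1})$ contribution must come from the tail sum: using $1-(1-p_j)^{k-h}\geq 1-e^{-(k-h)p_j}=\Omega\big((k-h)p_j\big)$, which is valid because $(k-h)p_j=O(1)$ for $j\geq q$, one obtains
\[
\sum_{j=q}^{m}f_j\big(1-(1-p_j)^{k-h}\big)=\Omega\Big(k\,\frac{H(\gamma_c+\gamma_r,q,m)}{H(\gamma_r,1,m)H(\gamma_c,1,m)}\Big)=\Omega(m^{-\eta_1}),
\]
which is precisely the route the paper takes (via the expansion of $1-e^{-k'p_j}$). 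As written, the lower half of your $\Theta$ claim is unproven.
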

\begin{proof}
For the lower-bound, we have:
\begin{align}
\nonumber E_v[v|A_{q,h}^{k}]=&f_q+\sum_{j=q}^{m} f_j (1-(1-p_j)^{k-h})\\
 \geq & \sum_{j=q}^{m} f_j (1-e^{-k^{\prime} p_j})
\end{align}
where $k^{\prime}=k-h=\Theta(m^{\eta_1})$. Using the taylor series, we obtain:
\begin{align}
\nonumber &E_v[v|A_{q,h}^{k}]\geq  \sum_{j=q}^{m} f_j k^{\prime}p_j+f_j\frac{1}{2!}(k^{\prime}p_j)^2+f_j\frac{1}{3!}(k^{\prime}p_j)^3+\ldots\\
\nonumber &=k^{\prime}\frac{H(\gamma_c+\gamma_r,q,m)}{H(\gamma_r,1,m)H(\gamma_c,1,m)}+\frac{1}{2!}{k^{\prime}}^2\frac{H(2\gamma_c+\gamma_r,q,m)}{H(\gamma_r,1,m)H(\gamma_c,1,m)^2}\\
&+\frac{1}{3!}{k^{\prime}}^3\frac{H(3\gamma_c+\gamma_r,q,m)}{H(\gamma_r,1,m)H(\gamma_c,1,m)^3}+\ldots
\end{align}
Parts (i) and (ii) of  lemma \ref{scaling} imply that all terms in the above equation are $\Theta(\frac{1}{m^{\eta_1}})$. 

%$E_v[v|A_{q,h}^{k}]$ is given in (\ref{E_v[v|A_{q,h}^{k}]}).
For showing the upper bound, 
\begin{align}
\nonumber E_v[v|A_{q,h}^{k}]&=f_q+\sum_{j=q+1}^{m}f_j(1-(1-p_j)^{{k}})\\
&\leq f_q+k\sum_{j=q}^{m}f_jp_j \label{two sum}\\
&\leq \frac{1}{q^{\gamma_r}H(\gamma_r,1,m)}+k \frac{H(\gamma_c+\gamma_r, q,m)}{H(\gamma_r,1,m)H(\gamma_c,1,m)}
\end{align}
 If we apply  the results of lemma \ref{scaling}, 
 we can show that $E_v[v|A_{q,h}^{k}]$ is $O(\frac{1}{m^{\eta_1}})$.  

 %Since $E_v[v|A_{q,h}^{k}]$ is less than $E_v[v|k]=\Theta(\frac{1}{m^{\eta_1}})$, we can conclude that  $E_v[v|A_{q,h}^{k}]=O(\frac{1}{m^{\eta_1}})$. 

%In the following we show the lower bound for $E_v[v|A_{q,h}^{k}]$.
 
\end{proof}

\begin{lemma}\label{azuma}
For $t<E_v[v|A_{q,h}^{k}$],
\begin{align}\label{azuma1}
\Pr\big[{|v-E_v[v|A_{q,h}^{k}]|\leq t}\big]\geq  1-2\exp\big(-\frac{2t^2}{(k-h)(f_q)^2}\big) 
\end{align}
\end{lemma}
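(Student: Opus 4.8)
The plan is to recognize the right-hand side as an instance of the bounded-difference (McDiarmid) concentration inequality, which I will obtain from Azuma--Hoeffding applied to a Doob martingale. The first step is to isolate the source of randomness. Conditioned on the event $A_{q,h}^{k}$, exactly $h$ of the $k$ users cache file $q$, so the indicator ${\bf 1}_q$ is deterministically $1$ and contributes the constant $f_q$ to $v=\sum_{j=q}^{m}f_j{\bf 1}_j$. The remaining randomness lives entirely in the cached files $X_1,\ldots,X_{k-h}$ of the $k-h$ users who do \emph{not} cache file $q$; by exchangeability these are independent, each drawn from the caching distribution conditioned on avoiding file $q$. Thus I can write $v=g(X_1,\ldots,X_{k-h})$ with $g=f_q+\sum_{j=q+1}^{m}f_j\,{\bf 1}[\exists\,\ell:X_\ell=j]$.

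The key estimate is the bounded-difference property of $g$. Fix all coordinates but one and replace the file cached by a single user, say from $a$ to $b$. Only two indicators can change: ${\bf 1}_a$ may drop from $1$ to $0$ (lowering $v$ by $f_a$ if $a>q$ and no other user caches $a$), while ${\bf 1}_b$ may rise from $0$ to $1$ (raising $v$ by $f_b$ if $b>q$). Since these two effects act in opposite directions, the net change satisfies $|g-g'|\le\max(f_a,f_b)\le f_{q+1}\le f_q$, where monotonicity of the Zipf weights is used and files cached below rank $q$ contribute nothing to $v$. Hence each coordinate of $g$ has bounded difference at most $f_q$.

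With this in hand I form the Doob martingale $M_\ell=E\!\left[v\mid X_1,\ldots,X_\ell,A_{q,h}^{k}\right]$ for $\ell=0,\ldots,k-h$, so that $M_0=E_v[v\mid A_{q,h}^{k}]$ and $M_{k-h}=v$. The bounded-difference estimate forces $|M_\ell-M_{\ell-1}|\le f_q$ for every $\ell$. Azuma--Hoeffding applied to this martingale with its $k-h$ increments then yields $\Pr[\,|v-E_v[v\mid A_{q,h}^{k}]|\ge t\,]\le 2\exp\!\big(-2t^2/((k-h)f_q^2)\big)$, and passing to the complementary event gives precisely the claimed bound.

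The main obstacle is the bounded-difference step: one must see that a single re-caching toggles at most the old and new file indicators and that these move in opposite directions, so the per-coordinate Lipschitz constant is $\max(f_a,f_b)\le f_q$ rather than the naive sum $f_a+f_b$. Getting this constant right is exactly what produces the factor $(k-h)f_q^2$ in the exponent, while noticing that the conditioning freezes the $h$ file-$q$ users and leaves only $k-h$ independent choices is what produces the correct count $k-h$ of martingale increments.
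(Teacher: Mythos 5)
Your proof is correct and follows essentially the same route as the paper: both view $v$ as a function of the independently cached files with per-coordinate bounded difference at most $f_q$ and invoke Azuma--Hoeffding via the Doob martingale (i.e., McDiarmid's bounded-difference inequality). You are in fact more explicit than the paper about why only $k-h$ coordinates remain random after conditioning on $A_{q,h}^{k}$ and why the per-coordinate constant is $\max(f_a,f_b)\le f_q$ rather than $f_a+f_b$; the paper simply asserts both.
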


\begin {proof}
Function $v:\{1,2,\ldots,m\}^k \rightarrow R$ is equal to 
\[v(\omega_1,\omega_2,\ldots, \omega_k)=\sum_{i\in {\tilde{ \omega}\cap Q}} f_i\]
where $\omega_j$ is the file that user $j$ stores, $\tilde{ \omega}=\cup_{j=1}^{k}\omega_j$ and $Q=\{q, q+1,\ldots,m\}$.
$v$ is the sum of popularity of union of files stored by users when only files in set $Q$ are considered to be valuable. 
By replacing the $i$th coordinate $\omega_i$ by some other value  the value of $v$ can change at most by $f_q$, \textit{i.e.}, 
\[\sup_{\omega_1,\ldots, \omega_k, \hat\omega_i} |v(\omega_1,\ldots, \omega_k)-v(\omega_1,\ldots, \hat\omega_i, \omega_{i+1} \ldots, \omega_k)|\leq f_q\]
Using Azuma-Hoefding inequality \cite{mitzenmacher2005probability},
\begin{align}\label{azuma1}
\Pr\big[{|v-E_v[v|A_{q,h}^{k}]|\geq t}\big]\leq 2\exp\big(-\frac{2t^2}{(k-h)(f_q)^2}\big) 
\end{align}
 the result follows.% we have:

\end{proof}

\bibliographystyle{IEEEtran}
\bibliography{ref_gupta.bib}

\end{document}